\def\ulsib{\mbox{ULS-IB} }
\def\dulsib{2ULS-IB }
\def\dulsibr{\mbox{2ULS-IB$_R$} }
\def\dulsibs{\mbox{2ULS-IB$_S$} }
\def\dulsibsr{\mbox{2ULS-IB$_{SR}$} }
\def\dulsibsrNLS{\mbox{2ULS-IB$_{SR}$-NLS} }
\def\dulsibsNLS{\mbox{2ULS-IB$_{S}$-NLS} }
\def\dulsibrNLS{\mbox{2ULS-IB$_{R}$-NLS} }
\def\ulsNLS{\mbox{ULS-IB-NLS} }
\def\reg{regular }
\def\etal{{\it et al. }}
\def\ie{{\it i.e. }}
\title{Two-level lot-sizing with inventory bounds}
\author{Siao-Leu Phouratsamay, Safia Kedad-Sidhoum, Fanny Pascual \\[0.1cm]
{\normalsize Sorbonne Universit\'es, UPMC Univ Paris 06, CNRS, LIP6 UMR 7606, 
4 place Jussieu 75005 Paris.} \\
{\tt\normalsize \{siao-leu.phouratsamay,safia.kedad-sidhoum,fanny.pascual\}@lip6.fr}
}
\begin{document}
\maketitle
\thispagestyle{empty}

\begin{abstract}
We study a two-level uncapacitated lot-sizing problem with inventory bounds that occurs in a supply chain composed of a supplier and a retailer. The first level with the demands is the retailer level and the second one is the supplier level. The aim is to minimize the cost of the supply chain so as to satisfy the demands when the quantity of item that can be held in inventory at each period is limited. The inventory bounds can be imposed at the retailer level, at the supplier level or at both levels. We propose a polynomial dynamic programming algorithm to solve this problem when the inventory bounds are set on the retailer level. When the inventory bounds are set on the supplier level, we show that the problem is NP-hard. We give a pseudo-polynomial algorithm which solves this problem when there are inventory bounds on both levels. In the case where demand lot-splitting is not allowed, \ie each demand has to be satisfied by a single order, we prove that the uncapacitated lot-sizing problem with inventory bounds is strongly NP-hard. This implies that the two-level lot-sizing problems with inventory bounds are also strongly NP-hard when demand lot-splitting is considered.
\end{abstract}

\section{Introduction} 
\par We consider a two-level supply chain with a supplier and a retailer. 
The retailer has to satisfy a demand for a single item over a finite planning horizon of discrete periods. 
In order to satisfy the demand, the retailer has to determine a replenishment plan over the horizon, \ie  when and how many units to order. 
In order to satisfy the retailer's replenishment plan, the supplier has to determine a production plan. 
Ordering units induce a fixed ordering cost and a unit ordering cost for both actors. Carrying units in the inventory induce a unit holding cost for both actors as well.
Moreover, the quantity that can be held in inventory at each period can be limited, since  inventory bounds can be imposed at the retailer level, at the supplier level or at both levels. 
The cost of the supply chain is given by the sum of the supplier and the retailer total costs. 
The two-level Uncapacitated Lot-Sizing (2ULS) problem with inventory bounds consists in determining the order and the inventory quantities at each period for both replenishment and production plans in order to satisfy the external demand while minimizing the total cost of the supply chain.

\subsection*{Literature review}
\par For many practical applications, it is unreasonable to suppose that the inventory capacity is unlimited. 
In particular, the products that need temperature control or special storage facilities may have a limited storage capacity. This is for example the case in the pharmaceutical industry~\cite{AtamturkAlgo}. 
These constraints have led to the study of lot-sizing problems with inventory bounds.

\par The single level Uncapacitated Lot-Sizing problem with Inventory Bounds (ULS-IB) was first introduced by Love~\cite{Love1973}. He proves that the problem with piecewise concave ordering and holding costs
and backlogging can be solved using an $\mathcal{O}(T^3)$ dynamic programming algorithm. 
Atamt\"urk and K\"u\c{c}\"ukyavuz~\cite{AtamturkAlgo} study the ULS-IB problem under the cost structure assumed in Love's paper~\cite{Love1973}, considering in addition a fixed holding cost. 
They propose an $\mathcal{O}(T^2)$ algorithm to solve the problem. They also make a polyhedral study of the ULS-IB problem~\cite{AtamturkPoly} by considering two cost structures: linear holding costs, linear and fixed holding costs. They provide an exact separation algorithm for each problem. 
Toczylowski \cite{Toczylowski} addresses this problem by solving a shortest path problem in $\mathcal{O}(T^2)$ time. 
More recently, Hwang and van den Heuvel \cite{Hwang} propose an $\mathcal{O}(T^2)$ dynamic programming algorithm to solve the ULS-IB problem with backlogging and a concave cost structure by exploiting the so-called Monge property. 
Guti\'errez \etal \cite{Gutierrez2008} improved the time complexity by developing an algorithm that runs in $\mathcal{O}(T\mbox{log}T)$ using the geometric technique of Wagelmans and van Hoesel \cite{WagelmansULS}. 
However, van den Heuvel \etal \cite{vandenHeuvel2011} show that their algorithm does not provide an optimal solution for the ULS-IB problem. 
Liu~\cite{Liu2008} proposes an $\mathcal{O}(T^2)$ algorithm based on the geometric approach in~\cite{WagelmansULS} but \"Onal \etal \cite{Onal2012} prove that his algorithm does not compute an optimal  solution for the ULS-IB problem.

\par Zangwill~\cite{Zangwill} proposes an $\mathcal{O}(LT^4)$ dynamic programming algorithm for the multi-level uncapacitated lot-sizing problem (where $L$ is the number of levels). 
In particular, van Hoesel \etal observe that Zangwill's algorithm runs in $\mathcal{O}(T^3)$ when $L=2$~\cite{vanHoesel2005}.
More recently, Melo and Wolsey~\cite{Melo_Wolsey} improve this complexity by proposing an $\mathcal{O}(T^2 \mbox{log} T) $ dynamic programming algorithm. 
Zhang \etal \cite{Zhang2012} propose a polyhedral study of the multi-level lot-sizing problem where each level has its own external demand. 
They give an $\mathcal{O}(T^4)$ dynamic programming algorithm to solve the two-level problem. 
A few papers deal with the 2ULS problem with inventory bounds. Jaruphongsa \etal\cite{Jaruphongsa2004} study this problem with 
demand time window constraints and stationary inventory bounds at the supplier level. 
They impose some assumptions on the cost parameters (among them, the unit production cost is non-increasing). 
These assumptions make the problem solvable in $\mathcal{O}(T^3)$ using a dynamic programming algorithm. 
They also prove that when each demand is satisfied by a single dispatch, the problem is NP-hard. 
Hwang and Jung~\cite{HwangJung} propose a dynamic programming algorithm that solves the \dulsib problem with inventory bounds at the retailer level and concave costs in $\mathcal{O}(T^4)$ which has the same complexity than the one provided in this paper. 
However, contrary to their result, we present a dynamic programming algorithm based on some structural properties specific to the inventory bounds for which we give correctness proofs.

\subsection*{Contributions}
\par 
In this paper, we study the complexity of single-item 2ULS problems with inventory bounds.
We consider that either the supplier, the retailer, or both of them, have a limited inventory capacity.
A polynomial dynamic programming algorithm is provided to solve the problem with inventory bounds at the retailer level.
The problem is shown to be weakly NP-hard when the inventory bounds are imposed at the supplier level.
A complexity analysis for this class of problem is also proposed under the no lot-splitting assumption.
In the sequel, we will denote \dulsibr (resp. \dulsibs), the problem where at each period, the inventory quantity at the retailer (resp. supplier) 
level cannot exceed the inventory bound. Finally, the \dulsibsr problem is the problem where both the supplier and the retailer have a limited inventory capacity. 
\\

 
\par  This paper is organized as follows. A mathematical formulation for the single-item 2ULS problem with inventory bounds is provided in Section~\ref{sec:notations}. 
In Section~\ref{sec:2uls_ibr}, we propose a polynomial algorithm to solve the \dulsibr problem. In Section \ref{sec:2uls_ibs}, 
we prove that the \dulsibs problem is NP-hard. We also show that the \dulsibsr problem is solvable using a pseudo-polynomial algorithm in Section \ref{sec:2uls_ibsr}. 
Finally,  we prove in Section \ref{sec:2uls_nls} that these problems are strongly NP-hard under the no lot-splitting assumption 
where each demand has to be satisfied by a unique order.

\section{Mathematical formulations} \label{sec:notations} 
\par In this section, we describe the mathematical formulation of the 2ULS problem as well as the inventory bound constraints for the addressed problems. 

\par  Let $T$ be the number of periods over the planning horizon. 
We denote by $d_t$ the demand at each period $t$ for $t\in\{1,\cdots,T\}$. The retailer's (resp. supplier's) costs are defined by  
a fixed ordering cost $f^R_t$ (resp. $f^S_t$), a unit ordering cost $p^R_t$ (resp. $p^S_t$) and a unit holding cost $h^R_t$ (resp. $h^S_t$) for $t\in\{1,\cdots,T\}$. 
The retailer's (resp. supplier's) inventory bound at each period $t$ is denoted by $u^R_t$ (resp. $u^S_t$) for $t \in \{1,\cdots,T\}$.

\par  We denote by $x^R_t$ (resp. $x^S_t$) the quantity ordered by the retailer (resp. supplier) at period $t$, $s^R_t$ (resp. $s^S_t$) the retailer's (resp. supplier's) inventory level at the end of period 
$t$ and $y^R_t$ (resp. $y^S_t$) the retailer's (resp. supplier's) setup variable, which is equal to 1 if an order occurs at period $t$ at the retailer (resp. supplier) level and 0 otherwise. The 2ULS problem can be formulated as follows:
\begin{align}
\min & \sum \limits_{t=1}^{T} (f^S_ty^S_t + p^S_tx^S_t + h^S_ts^S_t + f^R_ty^R_t + p^R_tx^R_t + h^R_ts^R_t) \label{eq:obj}\\
\mbox{s.t. } & s^R_{t-1} + x^R_t = d_t + s^R_t \qquad \forall  t \in \{1,\ldots,T\}, \label{eq:flow2}\\
& s^S_{t-1} + x^S_t = x^R_t + s^S_t \qquad \forall  t \in \{1,\ldots,T\}, \label{eq:flow1}\\
& x^R_t \leq M^R_t y^R_t \qquad \forall  t \in \{1,\ldots,T\}, \label{lien_xy2}\\
& x^S_t \leq M^S_t y^S_t \qquad \forall  t \in \{1,\ldots,T\}, \label{lien_xy1} \\
& x^S, x^R, s^S, s^R \geq 0  \label{eq:duls_positive} \\
& y^S, y^R \in \{0,1\}^T  \label{eq:duls_binaire}
\end{align}
where $M^R_t = M^S_t = \sum_{i=t}^T d_i$. 
The supply chain total cost is given by (\ref{eq:obj}). 
Constraints~(\ref{eq:flow2}) (resp. (\ref{eq:flow1})) are the inventory balance constraints at the retailer (resp. supplier) level. 
The supplier demand is the amount ordered at the retailer level at each period $t$. 
Constraints~(\ref{lien_xy2}) and (\ref{lien_xy1}) force the setup variables to be equal to 1 if there is an order, \ie if $x^R_t > 0$ or $x^S_t > 0$ respectively. 

\par The 2ULS problem can be viewed as a fixed charge network flow problem (see Figure \ref{fig:network}) where the nodes represent the periods at each level. 
A source node is also considered in order to represent the total supplied quantity $\sum_{i=1}^T d_i$. 
For each node, the vertical inflows are the ordering quantities and the 
horizontal outflows represent the inventory quantities. In addition, arcs representing the external demand at each period at the retailer level are considered.  
In the  sequel, we will not represent the dummy node, and the arcs will be represented only if they are active (i.e. a vertical arc will be represented if the corresponding ordering quantity is positive, and a horizontal arc is represented if the corresponding inventory quantity is not null). \\

\begin{figure}[!ht]
    \centering
        \includegraphics[clip=true, scale = 0.9]{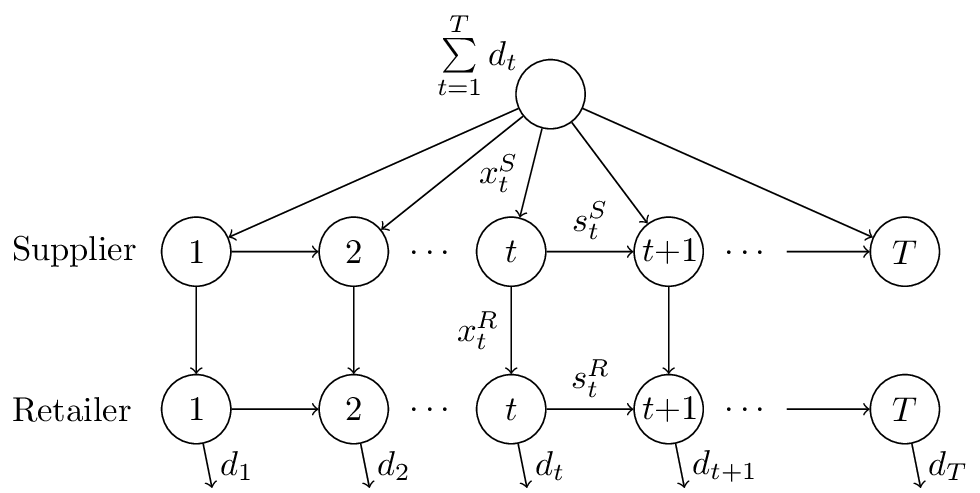}
        \caption[Fig]{The 2ULS problem as a fixed charge network flow.}
        \label{fig:network}
\end{figure} 

In addition to this classical problem, we introduce inventory bounds constraints. 
The inventory bounds constraints for the \dulsibr problem are given by: \begin{equation} s^R_t \leq u^R_t \qquad \forall t \in \{1,\ldots,T\}. \label{ibr_const} \end{equation}
The mathematical formulation can be strengthened by setting $M^R_t$ to $\min(d_t + u^R_t,\sum_{i=t}^T d_i)$ in the constraint~(\ref{lien_xy2}). 
Similarly, the inventory bounds constraints for the \dulsibs problem are given by: \begin{equation} s^S_t \leq u^S_t \qquad \forall t \in \{1,\ldots,T\}. \label{ibs_const}\end{equation}
Similarly, parameter $M^S_t$ can be replaced by $\min(d_t + u^S_t + u^R_t ,\sum_{i=t}^T d_i)$ in the constraint~(\ref{lien_xy1}). The mathematical formulation of the \dulsibsr problem is obtained by adding
the constraints (\ref{ibr_const}) and (\ref{ibs_const}) to the mathematical formulation of the 2ULS problem.  

\section{The \dulsibr problem } \label{sec:2uls_ibr}
\par  
To the best of our knowledge, the \dulsibr problem has not been studied in the literature yet. 
We present some structural properties of an optimal solution for the problem and propose an $\mathcal{O}(T^4)$ algorithm to solve it. 
Since the inventory bounds are only set at the retailer level, the superscript $R$ will be omitted in the inventory bound parameter $u^R_t$ that will be denoted by $u_t$. 

\par  Zangwill~\cite{Zangwill} shows that there exists an optimal solution for the 2ULS problem that verifies the Zero Inventory Ordering (ZIO) property at each level, \ie  $s^i_{t-1}x^i_t = 0$ for all $t \in \{1,\ldots,T\}$ and $i \in \{S,R\}$. 
As shown by~\cite{AtamturkPoly,Toczylowski}, the following assumption can be stated without loss of generality:
\begin{asump}\label{obs:ib_prop}
$u_{t-1} \leq u_t + d_t$ for all $t \in \{1,\ldots,T\}$.
\end{asump}

\subsection{Dominance properties}

\par  In this section, we propose some dominance properties in order to determine an efficient solving approach such that there exists an optimal solution for the \dulsibr problem that satisfies these properties. 

\par We know that the ZIO property does not hold for the ULS-IB problem \cite{Love1973,Liu2008}. 
Let us first show that for the \dulsibr problem, the cost of the best solution  in which  the ZIO property is fulfilled at the retailer level may be arbitrarily large compared to the cost of an optimal solution in which the ZIO property is no required.

\begin{prop}
For the \dulsibr problem, the cost of the best ZIO policy at the retailer level may be arbitrarily large compared to the cost of an optimal policy.
\end{prop}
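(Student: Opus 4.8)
The plan is to exhibit a small family of instances of the \dulsibr problem, parameterized by a quantity that can be made arbitrarily large, on which every solution respecting the ZIO property at the retailer level is forced to pay a cost that grows with that parameter, while some fixed (ZIO-violating) solution keeps the cost bounded. The natural mechanism is the following: make the retailer's ordering cost structure such that ordering in many periods is very expensive, but the retailer's inventory bound $u_t$ is small enough that a single early order cannot cover all later demands while obeying the bounds. A ZIO policy must order exactly when the incoming inventory is zero; if the bounds force the inventory to drop to zero at some period where we do not want to reorder, ZIO compels an extra costly order there. A non-ZIO policy, by contrast, may carry a sliver of inventory through that period (staying under the bound) and thereby avoid paying the fixed ordering cost again.

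Concretely, I would take a horizon of (say) three or four periods with unit demands $d_t=1$, set the retailer inventory bound $u_t$ so that the inventory after an early bulk order is allowed to remain positive but bounded (e.g.\ $u_t=1$ at the relevant periods, consistent with Assumption~\ref{obs:ib_prop}), and choose the retailer fixed ordering cost $f^R_t$ to be $0$ at period $1$ and some large value $K$ at a later period, with all holding and unit costs set to $0$ at both levels and the supplier costs negligible. The optimal (non-ZIO) solution orders everything at period $1$ at the retailer, carrying inventory that never exceeds $u_t$, for total cost $0$ (or a fixed constant). Any retailer-ZIO solution, however, cannot cover all demands from the period-$1$ order without exceeding some $u_t$, so its incoming inventory hits $0$ at a later period, forcing a second retailer order and a cost of at least $K$. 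Letting $K\to\infty$ gives the claimed unbounded ratio.

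The main obstacle — really the only delicate point — is calibrating the inventory bounds so that (i) the intended non-ZIO optimum genuinely satisfies all the bound constraints $s^R_t\le u_t$, (ii) Assumption~\ref{obs:ib_prop} ($u_{t-1}\le u_t+d_t$) holds so the instance is legitimate, and (iii) every retailer-ZIO feasible solution is provably forced into the expensive second order. I would verify (iii) by a short case analysis on where the retailer places its orders: because ZIO forces $s^R_{t-1}x^R_t=0$, an order at period $t$ must be preceded by zero inventory, and the bound $u_t$ limits how far forward any single order can "reach," so with the numbers chosen no single order (nor the allowed pattern of orders) can serve the whole horizon cheaply. Once the arithmetic of one explicit small instance is pinned down, the conclusion is immediate, and the proposition follows.
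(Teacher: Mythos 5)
There is a genuine gap here: the cost mechanism you rely on cannot produce an unbounded ratio, and your concrete instance is internally inconsistent. You state that the optimal non-ZIO solution ``orders everything at period $1$ at the retailer, carrying inventory that never exceeds $u_t$''; but a solution whose only retailer order is at period $1$ satisfies $s^R_{t-1}x^R_t=0$ for every $t$ (since $s^R_0=0$ and $x^R_t=0$ for $t\geq 2$), so it \emph{is} a ZIO solution, contradicting your claim that no retailer-ZIO solution can cover the horizon from a period-$1$ order. If instead, as your setup requires, the bounds make a single period-$1$ order infeasible, the deeper problem appears: the bound caps $s^R_{t-1}$ at $u_{t-1}$ for \emph{every} feasible solution, ZIO or not, so at least $d_{tT}-u_{t-1}$ units must be ordered at or after period $t$ by every solution. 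Whenever the ZIO solution is ``forced into the expensive second order,'' so is the non-ZIO one; with all unit and holding costs set to $0$ as you propose, both pay the fixed cost of one extra order and the ratio stays bounded (indeed essentially $1$ in your two-order scenario). Your verification step (iii) would therefore succeed but prove nothing, because it establishes a property shared by all feasible solutions.

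The fix is to make the expensive cost \emph{quantity-dependent}: what a non-ZIO policy buys is not the right to skip an order, but the right to shift most of the ordered \emph{volume} to a cheap period and top up with a small residual at the expensive one. This is exactly the paper's construction: $T=2$, $d=[0,B+1]$, $u=[B,B]$, $f^R=h^R=[0,0]$, $p^R=[0,1]$, with supplier costs arranged to be negligible. A period-$1$ retailer order can place at most $u_1=B<B+1$ units in stock, and ZIO forbids combining it with a period-$2$ order, so every retailer-ZIO solution orders all $B+1$ units at period $2$ at unit cost $1$, paying $B+1$; the non-ZIO solution $x^R=[B,1]$ pays only $p^R_2\cdot 1=1$, giving an unbounded ratio as $B\to\infty$. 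Replacing your fixed cost $K$ at the late period by a large \emph{unit} ordering cost there (keeping the early period free), and then running your case analysis on where the retailer may order, repairs the argument and also checks your points (i) and (ii), since $u=[B,B]$ satisfies Assumption~\ref{obs:ib_prop}.
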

\begin{proof}
\par Consider the following instance $\mathcal{I}$: $T=2$, $h^S = p^S = f^R = h^R = [0,0]$, $f^S = [0,1]$, $p^R = [0,1], d = [0,B+1]$ and $u = [B,B]$, where $B$ is a large constant. 
The best solution satisfying the ZIO property at the retailer level is given by $x^S = [d_2,0]$, $ x^R = [0,d_2] $. The corresponding cost is $B+1$ whereas 
the optimal non-ZIO solution is given by $x^S = [d_2,0]$, $x^R = [B,1]$ inducing a cost equals to $1$ (see Figure~\ref{fig:zio_ib}). \end{proof}
\begin{figure}[!ht]
    \centering
        \includegraphics[clip=true, scale = 0.9]{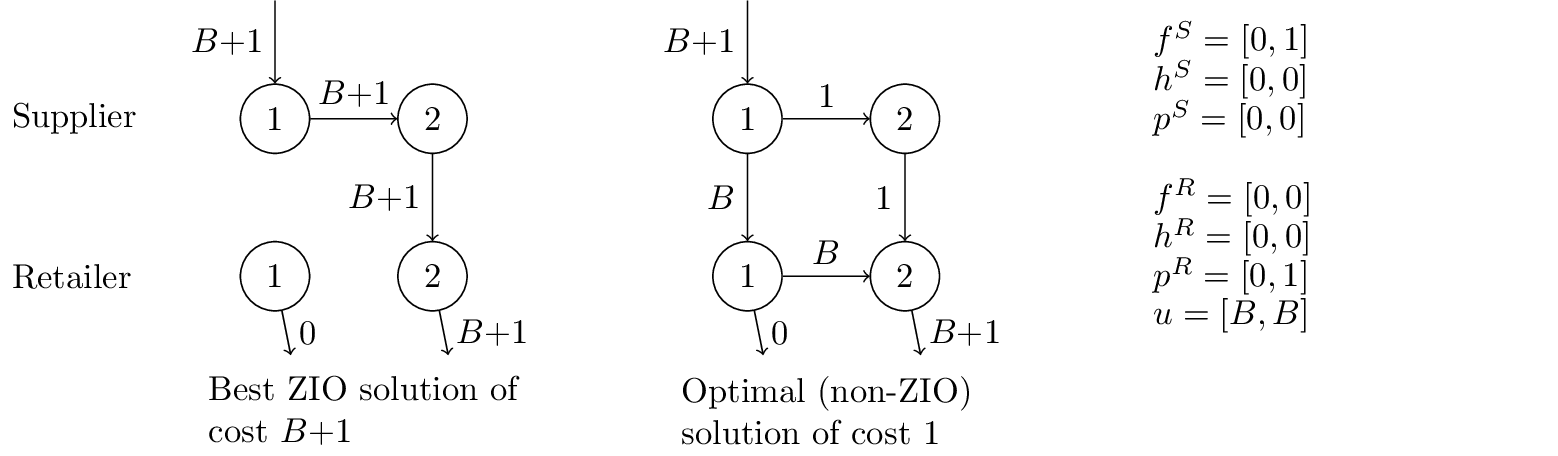}
        \caption[Fig]{Solutions for the instance $\mathcal{I}$ of the \dulsibr problem.}
        \label{fig:zio_ib}
\end{figure} 

\par  Let us now give the definition of a block (Definition~\ref{def:block}), previously introduced in \cite{AtamturkAlgo,AtamturkPoly} for the single level case. 

\begin{defn}[Subplan]
Let $i$ and $j$ be two periods such that $1 \leq i \leq j \leq T$. A subplan $[i,j]$ is a partial solution at the retailer level of the \dulsibr problem between the periods $i$ and $j$ defined by $x^R_i,\ldots,x^R_j$. 
\end{defn}

\begin{defn}[Regular subplan]
Let $i$ and $j$ be two periods such that $1 \leq i \leq j \leq T$. A \reg subplan $[i,j]$ is a subplan $[i,j]$ such that $s_{i-1}^R\in\{0,u_{i-1}\}$ and $s_j^R\in\{0,u_j\}$. 
\end{defn}

\begin{defn}[Block]\label{def:block}
Let $i$ and $j$ be two periods such that $1 \leq i \leq j \leq T$. Let $\alpha \in \{0,u_{i-1}\}$ and $\beta \in \{0,u_{j}\}$. A block $[i,j]^\alpha_\beta$ is a regular subplan $[i,j]$ where $s^R_{i-1} = \alpha$, $s^R_j = \beta$ and $0 < s^R_t < u_t$ for all $ t \in \{i,\ldots,j-1\}$. 
\end{defn}

In other words, a block $[i,j]^\alpha_\beta$ is a regular subplan $[i,j]$ where the inventory quantities for each period between $i$ and $j$ are strictly positive but not equal to the inventory bound. 
A \reg subplan is made of one or several blocks.

\begin{defn}[Order quantity]\label{def:block_prod}
Let $d_{tk} = \sum_{i=t}^k d_i$ be the cumulative demand between periods $t$ and $k$. The order quantity at the retailer level in a subplan $[i,j]$ is given by $X_{ij} = d_{ij} - s^R_{i-1} + s^R_j$.
\end{defn}
\par  Observe that for a block $[i,j]^\alpha_\beta$, $X_{ij} = d_{ij} - \alpha + \beta$. Thereafter, we give some properties observed by an optimal solution for the \dulsibr problem.

\begin{thm} \label{thm:one_prod_period}
Let $\mathcal{P}$ be the set of points that satisfy Constraints~(\ref{eq:flow2})-(\ref{ibr_const}) of the \dulsibr problem. 
A point $(y^R,x^R,s^R,y^S,x^S,s^S) \in \mathcal{P}$ is an extreme point if and only if: 
\begin{enumerate}
\item there is at most one ordering period in every block $[i,j]^\alpha_\beta$, for all $1 \leq i \leq j \leq T$, $\alpha \in \{0,u_{i-1}\}$, $\beta \in \{0,u_j\}$, 
\item the ZIO property holds at the supplier level. 
\end{enumerate}

\end{thm}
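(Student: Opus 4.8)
The plan is to prove both implications by the standard extreme-point / network-flow argument, exploiting that the feasible region $\mathcal{P}$ is the projection of a fixed-charge network flow polyhedron once the setup variables $y^R,y^S$ are fixed. Recall that a feasible point is an extreme point of $\mathcal{P}$ exactly when it is not a strict convex combination of two other feasible points; since the only nonlinearity is in the linking constraints~(\ref{lien_xy2})--(\ref{lien_xy1}), one first observes that at an extreme point the binary variables are determined by the support of $x^R,x^S$ (a setup is $1$ iff the corresponding order is positive), so it suffices to characterize when the continuous vector $(x^R,s^R,x^S,s^S)$ is an extreme point of the polytope obtained by fixing the active arcs. That polytope is a flow polytope on the network of Figure~\ref{fig:network} augmented with the upper bounds $s^R_t\le u_t$; its extreme points are exactly the ones whose set of tight constraints contains a spanning structure, equivalently the ones in which the \emph{non-basic} (free) arcs — active order arcs, and inventory arcs with $0<s^R_t<u_t$ — form a forest in the network.

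\medskip

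\textbf{($\Rightarrow$) Extreme point implies the two structural conditions.} Suppose a structural condition fails and derive a nontrivial perturbation. If some block $[i,j]^\alpha_\beta$ contains two ordering periods $t_1<t_2$, then inside that block every intermediate inventory satisfies $0<s^R_t<u_t$, so there is slack on both sides; I would push $+\epsilon$ through $x^R_{t_1}$, route it as extra inventory $s^R_{t_1},\dots,s^R_{t_2-1}$, and pull $-\epsilon$ from $x^R_{t_2}$, keeping all demands, all bounds, and the supplier flow conservation satisfied for small $\epsilon>0$; the reverse perturbation is also feasible, so the point is a midpoint — contradiction. If the ZIO property fails at the supplier level, there is a period $t$ with $s^S_{t-1}>0$ and $x^S_t>0$; then $+\epsilon$ on $x^S_{t-1}$ stored one extra period together with $-\epsilon$ on $x^S_t$ (and symmetrically) gives two feasible points averaging to the original one — again a contradiction. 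Hence an extreme point satisfies conditions~1 and~2.

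\medskip

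\textbf{($\Leftarrow$) The two conditions imply extremality.} Conversely, assume conditions~1 and~2 hold and show the point cannot be written as $\tfrac12(w'+w'')$ with $w',w''\in\mathcal{P}$ distinct and agreeing with it in support. The key is that conditions~1 and~2 force the active/free arcs to form a forest: at the retailer level, within each block there is at most one order arc and the intermediate inventory arcs are free, but a block is bounded by a \reg interface ($s^R_{i-1}\in\{0,u_{i-1}\}$, $s^R_j\in\{0,u_j\}$) which is a tight constraint, so each block contributes an independent sub-tree with exactly the right number of free arcs; at the supplier level ZIO means the active order arcs and the active inventory arcs never overlap at a node, so they again form a forest. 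On a forest, flow conservation plus the fixed boundary values determine all free variables uniquely, so no nontrivial perturbation exists; I would make this precise by a dimension/rank count — the number of free variables equals the number of independent equality constraints they participate in — or, equivalently, by the classical fact that a basic feasible solution of a network flow problem corresponds to a spanning forest.

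\medskip

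The main obstacle I anticipate is the bookkeeping in the ($\Leftarrow$) direction: one must handle the two-level structure simultaneously, because a perturbation of $x^R$ is simultaneously a demand perturbation at the supplier level, and the supplier inventory arcs (which are unbounded) could in principle absorb it. The clean way around this is to argue level by level — first show the retailer sub-vector is pinned down by conditions~1 and the \reg block boundaries (which are themselves tight), treating $x^R$ as exogenous demand; then, with $x^R$ fixed, show the ZIO condition pins down the supplier sub-vector exactly as in Zangwill's argument for the uncapacitated case. I would also state explicitly the innocuous but necessary point that Assumption~\ref{obs:ib_prop} and the strengthened big-$M$ values are never binding in a way that creates extra tight constraints, so they do not affect the characterization.
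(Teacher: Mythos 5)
Your overall strategy is the right one, and it is in fact \emph{more} detailed than what the paper provides: the paper does not prove Theorem~\ref{thm:one_prod_period} at all, but simply invokes Zangwill's characterization of extreme flows in single-source fixed-charge networks, adapted to the capacitated retailer arcs via the block decomposition. Your perturbation/spanning-forest argument is precisely the content of that citation, so in spirit the two routes coincide; yours just makes the cycle-cancelling explicit.

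There are, however, two local gaps in your forward direction as written. First, in the two-ordering-periods case, the perturbation $x^R_{t_1}\mathbin{+}\epsilon$, $x^R_{t_2}\mathbin{-}\epsilon$ with the extra flow routed through $s^R_{t_1},\dots,s^R_{t_2-1}$ does \emph{not} ``keep the supplier flow conservation satisfied'': Constraints~(\ref{eq:flow1}) are violated at $t_1$ and $t_2$ unless you simultaneously perturb supplier variables. You must exhibit a compensating two-sided perturbation at the supplier level, e.g.\ by tracing the positive supplier arcs (orders and inventories) that feed $x^R_{t_1}$ and $x^R_{t_2}$ back to the source and shifting $\pm\epsilon$ along the resulting cycle; this always exists because every positive delivery is supported by a chain of strictly positive supplier arcs, but it is a step, not a triviality, and it is exactly where the two-level structure bites. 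Second, in the ZIO-violation case your perturbation puts $+\epsilon$ on $x^S_{t-1}$, but $s^S_{t-1}>0$ does not imply $x^S_{t-1}>0$ (the inventory may originate from an earlier order); the $\pm\epsilon$ must be placed on the order period $k\le t-1$ actually feeding the positive inventory chain into period $t$, so that both signs of the perturbation remain feasible. Both issues are repairable by the same device (perturb along a cycle of the positive-flow subgraph through the source node), and with that repair your argument, including the level-by-level treatment of the converse, is sound.
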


This theorem follows from the properties related to the optimal flows in a fixed-charged network with concave costs~\cite{Zangwill1968}. 
As the \dulsibr problem is a single source fixed-charged network with linear costs, Theorem~\ref{thm:one_prod_period} is a direct application of the characterization of extreme points in these networks.

\begin{prop} \label{prop:prod_period}
An extreme point of $\mathcal{P}$ satisfies the following properties at the retailer level for all $1 \leq i \leq j \leq T$, $\alpha \in \{0,u_{i-1}\}$, and $\beta \in \{0,u_j\}$:
\begin{enumerate}[label=(\roman*)]
\item If $[i,j]^0_\beta$ is a block and if there is an ordering period in this block, then this ordering period is $i$.  \label{prop:prod_period_i}
\item If $[i,j]^\alpha_{u_j}$ is a block and if there is an ordering period in this block, then this ordering period is $j$. \label{prop:prod_period_ii}
\end{enumerate}
\end{prop}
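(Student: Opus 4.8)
The plan is to prove both statements by a local exchange argument on an extreme point, using the characterization of Theorem~\ref{thm:one_prod_period}: in the block $[i,j]^\alpha_\beta$ there is at most one ordering period, say $t\in\{i,\dots,j\}$, and we want to show $t=i$ in case~\ref{prop:prod_period_i} (when $\alpha=0$) and $t=j$ in case~\ref{prop:prod_period_ii} (when $\beta=u_j$). I would argue by contradiction: suppose the unique ordering period $t$ satisfies $t>i$ in case~\ref{prop:prod_period_i}. Then, since there is no order in periods $i,\dots,t-1$ and $s^R_{i-1}=\alpha=0$, the flow balance constraints~(\ref{eq:flow2}) force $s^R_{i-1}=s^R_i=\cdots=s^R_{t-1}$ to all equal the cumulative amount carried into those periods; with $s^R_{i-1}=0$ and $d_t\ge 0$ this pushes the inventories at the start of the block toward $0$. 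More precisely, $s^R_{\ell}= -\,d_{i\ell}\le 0$ for $\ell\in\{i,\dots,t-1\}$, which combined with $s^R_\ell\ge 0$ (constraint~(\ref{eq:duls_positive})) gives $d_{i\ell}=0$ and $s^R_\ell=0$ for all such $\ell$. But the definition of a block requires $0<s^R_\ell<u_\ell$ for $\ell\in\{i,\dots,j-1\}$, contradicting $s^R_\ell=0$ (as soon as $i<j$; if $i=j$ there is nothing to prove since $t\in\{i,j\}=\{i\}$). Hence $t=i$.

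For case~\ref{prop:prod_period_ii} the argument is symmetric, run from the right end of the block. Assume the unique ordering period is $t<j$. Then periods $t+1,\dots,j$ have no order, so by~(\ref{eq:flow2}) we get $s^R_{\ell-1}=s^R_\ell+d_\ell$ for $\ell\in\{t+1,\dots,j\}$, hence $s^R_\ell = s^R_j + d_{\ell+1,j}$ for $\ell\in\{t,\dots,j-1\}$. With $s^R_j=\beta=u_j$ this gives $s^R_\ell = u_j + d_{\ell+1,j}$. Using Assumption~\ref{obs:ib_prop} ($u_{\ell-1}\le u_\ell+d_\ell$, which telescopes to $u_\ell \le u_j + d_{\ell+1,j}$ for $\ell\le j$... wait, the inequality goes the other way), I would instead observe directly that $s^R_\ell = u_j + d_{\ell+1,j} \ge u_j$, and then compare to the block requirement $s^R_\ell < u_\ell$. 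So I need $u_\ell > u_j + d_{\ell+1,j}$ to derive a contradiction, which is exactly what Assumption~\ref{obs:ib_prop} does \emph{not} give in that direction; the clean contradiction instead is that $s^R_\ell \ge u_j$ together with the block condition $s^R_\ell < u_\ell$ forces $u_\ell > u_j$, while iterating Assumption~\ref{obs:ib_prop} from $\ell$ up to $j$ yields $u_\ell \le u_j + d_{\ell+1,j}$. That inequality alone is not contradictory, so the real point must be that $s^R_\ell=u_j+d_{\ell+1,j}$ and $s^R_\ell<u_\ell\le u_j+d_{\ell+1,j}$ are jointly impossible, i.e.\ $s^R_\ell = u_j+d_{\ell+1,j}\ge u_\ell$ contradicts $s^R_\ell<u_\ell$. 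Thus $t=j$.

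The step I expect to be the main obstacle is getting the right-end case~\ref{prop:prod_period_ii} to close: one must carefully combine the flow balance equations propagated backward from period $j$ with Assumption~\ref{obs:ib_prop} to see that putting the order strictly before $j$ would force some intermediate inventory $s^R_\ell$ to meet or exceed its bound $u_\ell$, violating the strict inequality $s^R_\ell<u_\ell$ in Definition~\ref{def:block}. The left-end case~\ref{prop:prod_period_i} is easier because there the obstruction is simply nonnegativity of inventory together with $s^R_{i-1}=0$. In both cases the degenerate situation $i=j$ should be dispatched first, and one should note that the statement is vacuous when the block contains no ordering period, so we may assume exactly one exists by Theorem~\ref{thm:one_prod_period}.
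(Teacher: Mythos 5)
Your proof is correct and follows essentially the same route as the paper: part~(i) rests on $s^R_{i-1}=0$ together with nonnegativity and the strict positivity of intermediate inventories in a block, while part~(ii) propagates flow balance backward from $s^R_j=u_j$ and telescopes Assumption~\ref{obs:ib_prop} to $u_\ell \le u_j + d_{\ell+1,j}$. Your single inequality $s^R_\ell = u_j+d_{\ell+1,j}\ge u_\ell$, contradicting $s^R_\ell<u_\ell$, in fact merges the paper's two subcases into one cleaner step.
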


\begin{proof}
\noindent (i) 
If $d_i>0$, then period $i$ is necessarily an ordering period since $s^R_{i-1} = 0$. \\
If $d_i=0$ and there is an ordering period in the block $[i,j]^0_\beta$, then we have $s^R_i > 0$ and period $i$ is necessarily an ordering period since $s^R_{i-1} = 0$. \\

\noindent (ii) Assume that the (unique) ordering period is $k$ in the block $[i,j]^\alpha_{u_j}$ with $i\leq k \leq j-1$. From Assumption~\ref{obs:ib_prop}, we have $u_{k} \leq u_{k+1} + d_{k+1} \leq u_{k+2} + d_{k+2} + d_{k+1} \leq \dots \leq u_{j} + d_{j} + \sum_{i=k+1}^{j-1} d_i$. Thus $u_{k} \leq u_{j} + d_{k+1,j}$.
There are two possible cases:\\
Case 1: $u_{k} = u_j + d_{k+1,j}$. In this case, since $s^R_j = u_j$, then $s^R_{k} = u_{k}$ which is not possible since  $[i,j]^\alpha_{u_j}$ is a block.\\
Case 2: $u_{j-1} < u_j + d_j$. In this case, it is not possible to have $s^R_j = u_j$ without having an additional ordering period in the block, which contradicts Theorem~\ref{thm:one_prod_period}.
So, the ordering period has to be at period $j$ in a block $[i,j]^\alpha_{u_j}$.
\end{proof}

\par Using Theorem~\ref{thm:one_prod_period} and Property~\ref{prop:prod_period}, we propose a polynomial algorithm to solve the \dulsibr problem.

\subsection{Recursion formula} \label{sec:2uls_ibc}
\par In this section, we derive a polynomial backward dynamic programming algorithm to solve the \dulsibr problem. The rationale of this algorithm is to compute a block decomposition 
of the retailer's replenishment plan such that the total cost of the supply chain is minimized using the dominance properties of the optimal solutions of the problem.   

\par 
Let $i,j$ be two periods such that $1 \leq i \leq j \leq T$. Let us consider a \reg subplan $[i,j]$ of a solution of the \dulsibr problem. 
Notice that by definition $[i,j]$ is not necessarily a block unless property $0<s^R_k < u_k$ for all $k \in \{i,\ldots,j-1\}$ holds. 
Assume that at period $t$, an order quantity $X_{ij} = d_{ij} - s^R_{i-1} + s^R_j$ (see Definition~\ref{def:block_prod}) is available at the supplier level, \ie 
it is either ordered at period $t$ or stored at period $t-1$ assuming the ZIO policy. 
The aim is to decompose the \reg subplan $[i,j]$ into blocks satisfying Property~\ref{prop:prod_period}. 

\par An example is given in Figure~\ref{fig:subplan_ibr}. The graph represents subplans of a solution for an instance of the \dulsibr problem where $T=4$. 
At period $t=1$, a quantity $X_{11} = d_1+u_1$ is available and at period $t=2$, a quantity $X_{24} = d_{24} - u_1$ is available at the supplier level (it is also available at period $3$). 
In this example, $[2,4]$ is a \reg subplan composed of the blocks $[2,3]^{u_1}_0$ and $[4,4]^0_0$. 
\begin{figure}[H]
    \centering
        \includegraphics[clip=true, scale = 1]{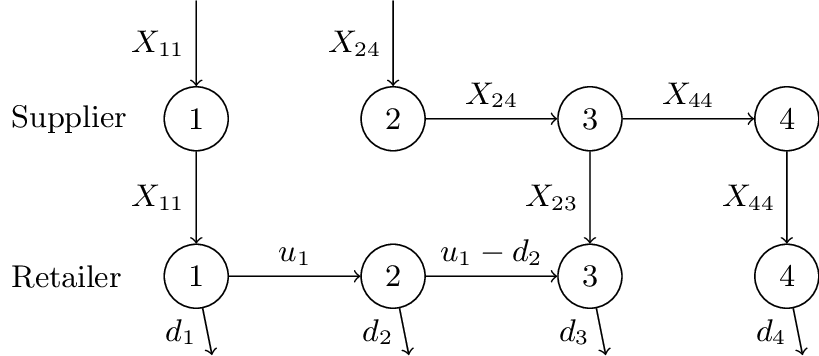}
        \caption[Fig]{Subplans decomposition for an instance of the \dulsibr problem where $T=4$.}
        \label{fig:subplan_ibr}
\end{figure}

\par The recursion formula will be defined in the following order given that $[i,j]$ is an interval of periods: 
the cost of a block ($\phi^{\alpha\beta}_{ijk}$), the cost functions required to compute the cost of a \reg subplan including the supplier's holding cost ($w^{\alpha\gamma\beta}_{tijk}$, $v^{\alpha\beta}_{tij}$ and $G^{\alpha\beta}_{tij}$) and finally the cost of the supply chain ($C^{\alpha\beta}_{tij}$).

\subsubsection{Computation of the cost of a block}

\par Let $\phi^{\alpha\beta}_{ijk}$ be the cost of satisfying the demands of a block $[i,j]^\alpha_\beta$ with a single ordering at period $k$ if it exists (Theorem~\ref{thm:one_prod_period}). 
We will denote by $\phi^{\alpha\beta}_{ij-}$ the cost of the block $[i,j]^\alpha_\beta$ without an ordering period. 

\par 
Using Property~\ref{prop:prod_period}\ref{prop:prod_period_i}, the cost $\phi^{0\beta}_{ijk}$ is defined as follows:
$$
\phi^{0\beta}_{ijk} = 
\left\{
\begin{array}{rl}
f^R_i + p^R_i X_{ij} + \sum \limits_{n=i}^{j} h^R_n (d_{n+1,j}+\beta), & \textrm{ if } k=i \textrm{ and } 0 < d_{ij} + \beta \leq u_i\\
0, & \textrm{ if } i=j \textrm{ and } d_{ij}+\beta = 0 \\
+\infty, & \textrm{ otherwise.}
\end{array}
\right.
$$

\par If $[i,j]^{u_{i-1}}_{u_j}$ is a block such that $u_{i-1} = d_{ij} + u_j$, then the demands of this block can be satisfied without setting any order between period $i$ and $j$ since $s^R_{i-1} = u_{i-1}$. 
Otherwise, using Property~\ref{prop:prod_period}\ref{prop:prod_period_ii}, a quantity $X_{ij}$ has to be ordered at period $j$. 
Moreover, we have to ensure that the inventory bounds constraints are not violated, 
and that the demands $d_{i,j-1}$ can be covered by the inventory quantity at the end of period $i-1$, \ie $u_{i-1} > d_{i,j-1}$.
Thus, the cost $\phi^{u_{i-1}u_j}_{ijk}$ of the block is given by:
$$
\phi^{u_{i-1}u_j}_{ijk} = 
\left\{
\begin{array}{rl}
\sum_{n=i}^{j} h^R_n (u_{i-1}-d_{in}+u_j),&\textrm{if } u_{i-1} = d_{ij} + u_j \\
f^R_k + p^R_k X_{ij} + \sum_{n=i}^{j-1} h^R_n(u_{i-1}- d_{in})+ h^R_j u_j, &  \\
\textrm{if } k=j \mbox{ and}&d_{ij} + u_j > u_{i-1} > d_{i,j-1} \\
+\infty,&\textrm{otherwise.}
\end{array}
\right.
$$

\par In a block $[i,j]^{u_{i-1}}_0$, if $u_{i-1} < d_{ij}$, then the quantity $X_{ij}$ can be ordered at any period $k$ between $i$ and $j$. 
In this case, we have to ensure that the inventory bounds constraints are not violated, that the inventory quantity $u_{i-1}$ covers the demands before period $k$ ($u_{i-1} > d_{i,k-1}$), 
and that the demands after period $k$ can be satisfied ($u_k \geq d_{k+1,j}$). The cost $\phi^{u_{i-1}0}_{ijk}$ is then given by:
$$
\phi^{u_{i-1}0}_{ijk} = 
\left\{ 
\begin{array}{rl}
f^R_k + p^R_k X_{ij} + \sum_{n=i}^{k-1}h^R_n(u_{i-1} - d_{in}) + \sum_{n=k}^{j-1} h^R_n d_{n+1,j}, & \\
 \textrm{ if } d_{ij} > u_{i-1} > d_{i,k-1} \textrm{ and}&u_k\geq d_{k+1,j} \\
\sum_{n=i}^{j-1}h^R_n(u_{i-1}-d_n),&\textrm{if } u_{i-1}=d_{ij} \\
+\infty, & \textrm{otherwise.}
\end{array}
\right.
$$

\subsubsection{Computation of the cost of a regular subplan}

\par Let $G^{\alpha\beta}_{tij}$ be the optimal cost to cover the demands $d_{ij}$ of the regular subplan $[i,j]$ where a quantity $X_{ij}$ is available at period $t$ at the supplier level with $1 \leq i \leq j \leq T$ and $1 \leq t \leq j$, $s^R_{i-1} = \alpha$ and $s^R_j = \beta$. 
Computing $G^{\alpha\beta}_{tij}$ requires the computation of the costs of the blocks that compose the subplan $[i,j]$. 
Therefore, in order to compute $G^{\alpha\beta}_{tij}$ efficiently, we first need to find a smart decomposition of the subplan $[i,j]$.

\par Let $w^{\alpha\gamma\beta}_{tijk}$ be the optimal cost of the subplan $[i,j]$ where a quantity $X_{ij}$ is available at period $t$ at the supplier level, $s^R_{i-1} = \alpha$, $s^R_j = \beta$ and $k$ is the ordering period of the first block $[i,l]^\alpha_\gamma$ of $[i,j]$, $i\leq l <  j$, $\gamma \in \{0,u_l\}$ (see Figure~\ref{fig:ibr_W}). 
The aim is to find a period $l$ ($k \leq l < j$) such that $[i,l]^\alpha_\gamma$ is the first block of the \reg subplan $[i,j]$ with an order at period $k$ in an optimal solution. 
The cost $w^{\alpha\gamma}_{tijk}$ is given by:
\begin{equation}
w^{\alpha\gamma\beta}_{tijk} = \min \limits_{k\leq l <j} \{ \phi^{\alpha\gamma}_{ilk} + G^{\gamma\beta}_{k,l+1,j} + \sum_{p=t}^{k-1} h^S_{p} X_{ij} \}.
\label{eq:w}
\end{equation}
The first term $\phi^{\alpha\gamma}_{ilk}$ in Equation~(\ref{eq:w}) represents the cost of satisfying the demands of the block $[i,l]^\alpha_\gamma$ with an ordering at period $k$. 
The second term $G^{\gamma\beta}_{k,l+1,j}$ in Equation~(\ref{eq:w}) represents the optimal cost for satisfying the demands of the \reg subplan $[l+1,j]$ assuming that the quantity $X_{l+1,j}$ is available at period $k$ at the supplier level. 
Finally, the last term $\sum_{p=t}^{k-1} h^S_{p} X_{ij}$ represents the cost of carrying $X_{ij}$ units from period $t$ to period $k$ at the supplier level.

\par 
A representation of the cost $w^{\alpha\gamma\beta}_{tijk}$ is depicted in Figure~\ref{fig:ibr_W}. There are $X_{ij}$ units available at period $t$ at the supplier level. 
At the retailer level, $[i,l]^\alpha_\gamma$ is the first block of the \reg subplan $[i,j]$, where $s^R_{i-1} = \alpha$ and $s^R_j = \beta$. 
A quantity $X_{il}$ is ordered at period $k$ in this block. 
At the supplier level, a quantity $X_{ij}$ is stored from period $t$ to period $k$ and an amount $X_{l+1,j}$ is available at period $k$ to satisfy the demands of the subplan $[l+1,j]$. 
The different terms of $w^{\alpha\gamma\beta}_{tijk}$ are shown in Figure~\ref{fig:ibr_W}. 
\begin{figure}[H]
    \centering
        \includegraphics[clip=true, scale = 1]{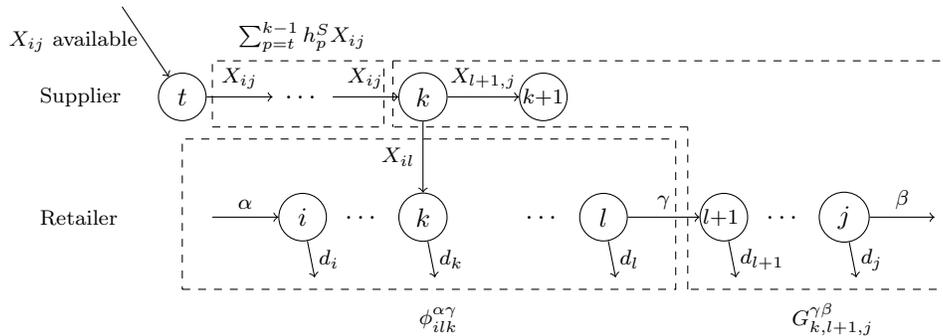}
        \caption[Fig]{Illustration of the cost $w^{\alpha\gamma\beta}_{tijk}$ where $[i,j]$ is a \reg subplan and an amount $X_{ij}$ is available at the supplier level at period $t$.}
        \label{fig:ibr_W}
\end{figure}

\par Let $v^{\alpha\beta}_{tij}$ be the minimum cost of a \reg subplan $[i,j]$ composed of at least two blocks such that $s^R_{i-1} = \alpha$ and $s^R_j = \beta$ and assuming that a quantity $X_{ij}$ is available at period $t$ at the supplier level, $1 \leq t \leq j$. 
In the sequel, we denote by $l^* = \mbox{argmin } w^{\alpha\gamma\beta}_{tijk}$. 
The first block of the \reg subplan $[i,j]$ is $[i,l^*]^\alpha_\gamma$ and the ordering period in this block is $k$ if it exists. 
The cost $v^{\alpha\beta}_{tij}$ is given by:
\begin{equation}
v^{\alpha\beta}_{tij} = \min \limits_{i\leq k <j; \gamma \in \{0,u_{l^*}\}} \{ w^{\alpha\gamma\beta}_{tijk} \}.
\end{equation} 

\par From the definition of the cost $v^{\alpha\beta}_{tij}$, we can then compute the cost $G^{\alpha\beta}_{tij}$ where the fixed ordering cost $f^S$ at the supplier level is not included.
The cost $G^{\alpha\beta}_{tij}$ is given by: 
\begin{equation}
G^{\alpha\beta}_{tij} = 
\left\{
\begin{array}{rl}
\min \text{{\Large \{}} \min \limits_{t \leq k \leq j} \{ \phi^{\alpha\beta}_{ijk} + \sum_{l=t}^{k-1} h^S_l X_{ij} \}, v^{\alpha\beta}_{tij} \text{{\Large \}}}, & \textrm{ if } X_{ij} > 0 \\
\phi^{\alpha\beta}_{ij-}, & \textrm{ if } X_{ij} = 0.
\end{array}
\right.
\label{eq:G}
\end{equation}
 In Equation~(\ref{eq:G}), the term $\min_{t \leq k \leq j} \{ \phi^{\alpha\beta}_{ijk} + \sum_{l=t}^{k-1} h^S_l X_{ij} \}$ represents the optimal cost of the \reg subplan $[i,j]$ 
 when it is made of a single block, and $v^{\alpha\beta}_{tij}$ is the optimal cost of $[i,j]$ when it is composed of at least two blocks.  
If $[i,j]^\alpha_\beta$ is not a block or if $k < i$ then the cost $\phi^{\alpha\beta}_{ijk}$ will be equal to $+\infty$. Moreover, $G^{\alpha\beta}_{tij}$ will be equal to $+\infty$ if $i > j$ or $t > j$.

\subsubsection{Computation of the cost of the supply chain}

\par Let $C^{\alpha\beta}_{tij}$ be the optimal cost of the supply chain for satisfying the demands $d_{ij}$ of the \reg subplan $[i,j]$ where the first ordering period at the supplier level is larger than or equal to $t$, with $1\leq i\leq j \leq T$, $1\leq t\leq j$, $\alpha \in \{0,u_{i-1}\}$ and $\beta \in \{0,u_j\}$. The total ordering quantity of the subplan is equal to $X_{ij}$. 
The aim is to determine the ordering periods satisfying the ZIO property at the supplier level in order to satisfy the demands of the \reg subplan $[i,j]$. 

\par If $X_{ij} = 0$, then no order is required at the supplier level. The cost is then equal to the cost $G^{\alpha\beta}_{tij}$ of the subplan $[i,j]$ with $s^R_{i-1} = \alpha$ and $s^R_j = \beta$:
\begin{equation} C^{\alpha\beta}_{tij} = G^{\alpha\beta}_{tij}. \end{equation}

\par If $X_{ij} > 0$, then the quantity $X_{ij}$ is completely or partially ordered at period $t$ or at a subsequent period if no order occurs at period $t$ at the supplier level. 
The cost $C^{\alpha\beta}_{tij}$ is given by the following equation where $\mathbb{1}(x)$ is a function equals to 0 if $x=0$ and $+\infty$ otherwise (see Figure~\ref{fig:ibr_C}).
\begin{align}
C^{\alpha\beta}_{tij} = \min \text{{\Large \{}} & C^{\alpha\beta}_{t+1,i,j}, 
f^S_t + p^S_t X_{ij} + G^{\alpha\beta}_{tij}, \label{eq:Cs}\\
& \min \limits_{i\leq l <j; \gamma \in \{0,u_l\}} \{ \min(f^S_t + p^S_t X_{il}, \mathbb{1}(X_{il})) + G^{\alpha\gamma}_{til} + C^{\gamma\beta}_{t^*+1,l+1,j} \} \text{{\Large \}}}, \nonumber
\end{align}
where $t^*$ is the last ordering period at the retailer level in the regular subplan $[i,l]$ ($t^*$ is determined and stored when the cost $G^{\alpha\gamma}_{til}$ is computed). 
The period $t^*+1$ is the earliest ordering period from which the supplier can order for satisfying the demands of the regular subplan $[l+1,j]$. If there is no ordering period in the \reg subplan $[i,l]$, then we set $t^*=t$.\\
The first term in Equation~(\ref{eq:Cs}) corresponds to the case where there is no order at period $t$ at the supplier level. The second term in Equation~(\ref{eq:Cs}) corresponds to the case where a quantity $X_{ij}$ is ordered at period $t$ at the supplier level. Finally, the last term in Equation~(\ref{eq:Cs}) represents the case where the quantity $X_{ij}$ is partially ordered at period $t$ at the supplier level: a quantity $X_{il} > 0$ is ordered at period $t$ to satisfy the demands of the \reg subplan $[i,l]$ with $i \leq l < j$. Because of the ZIO property at the supplier level, the supplier orders the quantity $X_{l+1,j}$ after period $t^*$. 

\par 
A representation of the last term of the cost $C^{\alpha\beta}_{tij}$ is provided in Figure~\ref{fig:ibr_C}.
In this figure, a quantity $X_{il}$ of units is ordered at period $t$ at the supplier level for satisfying the demands of the \reg subplan $[i,l]$ where $s^R_{i-1} = \alpha$ and $s^R_l = \gamma \in \{0,u_l\}$. 
The period $t^*$ corresponds to the last ordering period in the \reg subplan $[i,l]$. 
Since the ZIO property holds at the supplier level, we know that $s^S_{t^*}=0$.
Then, the next likely candidate for an ordering period at the supplier is the 
period $t^*+1$ if it exists. 
The components in the definition of the cost $C^{\alpha\beta}_{tij}$ are depicted in the figure. 
\begin{figure}[H]
    \centering
        \includegraphics[clip=true, scale = 1]{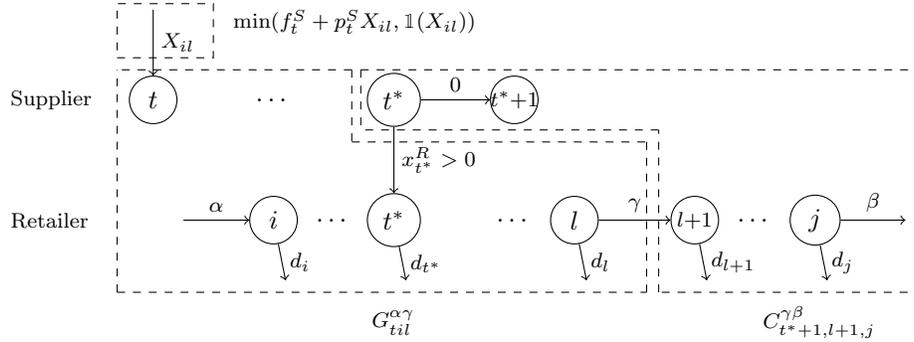}
        \caption[Fig]{Illustration of the cost $C^{\alpha\beta}_{tij}$  where $[i,j]$ is a \reg subplan and $t$ is an ordering period at the supplier level. }
        \label{fig:ibr_C}
\end{figure}

\noindent \textbf{Optimal cost.} 
\par The optimal cost of satisfying the demands of the \reg subplan $[1,T]$ is given by $C^{00}_{11T}$ since  $s^R_0=0$, $s^R_T=0$ and the earliest order period at the supplier level is $t=1$.

\subsubsection{Complexity analysis}
\par  A pre-processing phase will consist in the computation of $d_{1j}$ for all $j \in \{1,\ldots,T\}$ in $\mathcal{O}(T)$. 
Therefore, each $d_{ij}$ for all $i,j \in \{1,\ldots,T\}$ can be computed in constant time. 
Moreover, the holding costs required in the computation of each cost component is pre-computed and stored in $\mathcal{O}(T^2)$. \\
Therefore, the cost $\phi^{\alpha\beta}_{ijk}$ can be computed and stored in $\mathcal{O}(T^3)$ for all $i,j,k \in \{1,\ldots,T\}$. 
Besides, it takes $\mathcal{O}(T^4)$ time to compute and to store the costs $G^{\alpha\beta}_{tij}$ and $v^{\alpha\beta}_{tij}$. 
Finally, the cost $w^{\alpha\gamma\beta}_{tijk}$ is computed in $\mathcal{O}(T^5)$, and then the time complexity of the dynamic programming algorithm 
based on the recursion formula~(\ref{eq:Cs}) to compute $C^{00}_{11T}$ is $\mathcal{O}(T^5)$. 

\par In what follows, we show how the time complexity of computing the cost $w^{\alpha\gamma\beta}_{tijk}$ can be improved from $\mathcal{O}(T^5)$ to $\mathcal{O}(T^4)$ by generalizing the result of Atamt\"urk and K\"u\c{c}\"ukyavuz~\cite{AtamturkAlgo} for the 2ULS case.
To this end, we first need to recall the observation of Atamt\"urk and K\"u\c{c}\"ukyavuz~\cite{AtamturkAlgo} for the retailer level.
We provide a detailed explanation of the observation in~\ref{appendix:atamturk}.

\begin{obs} \label{obs1}
For all $1 \leq i \leq k \leq j \leq T$, $\alpha \in \{0,u_{i-1}\}$ and $\beta \in \{0,u_j\}$, we have:
\begin{enumerate}[label=(\roman*)]
\item if $\phi^{\alpha\beta}_{ijk} = +\infty$, then $\phi^{\alpha'\beta}_{i-1,j,k} = +\infty$ where $\alpha' = 0$ if $\alpha = 0$ and $\alpha' =u_{i-2}$ if $\alpha = u_{i-1}$.
\item if $\phi^{0\beta}_{ijk} \neq +\infty$, then $\phi^{0\beta}_{i-1,j,k} = +\infty$.
\item if $\phi^{u_{i-1} \beta}_{ijk} \neq +\infty$, then:\\
$\phi^{u_{i-2} \beta}_{i-1,j,k} = \left\{
\begin{aligned} 
& \phi^{u_{i-1} \beta}_{ijk} + \Delta_{1}, \\ 
&\quad \textrm{ if } u_{i-2} > d_{i-1,k-1}, u_{i-2} < u_{i-1}+d_{i-1} \textrm{ and } d_{ij} + \beta > u_{i-1}\\  
& \phi^{u_{i-1} \beta}_{ijk} + \Delta_{2}, \\
&\quad \textrm{ if } u_{i-2} > d_{i-1,k-1}, u_{i-2} < u_{i-1}+d_{i-1} \textrm{ and } d_{ij} + \beta = u_{i-1} \\
&  +\infty,  \mbox{ otherwise}
\end{aligned}
\right.
$
\\
\noindent where $\Delta_{1}=h^R_{i-2}u_{i-2} + (p^R_k - \sum_{l=i-1}^{k-1}h^R_l)(u_{i-1} - u_{i-2} + d_{i-1})$ and $\Delta_{2} =  f^R_k + \Delta_1$. 

\end{enumerate}
\end{obs}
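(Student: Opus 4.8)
The plan is to prove all three items by a case analysis on the boundary values $\alpha\in\{0,u_{i-1}\}$ and $\beta\in\{0,u_j\}$, reading everything off the closed forms for $\phi^{\alpha\beta}_{ijk}$ given in Section~\ref{sec:2uls_ibc}. The uniform picture is that $[i-1,j]$ is obtained from $[i,j]$ by \emph{prepending period $i-1$}: the left-boundary stock moves from $s^R_{i-1}=\alpha$ to $s^R_{i-2}=\alpha'$, so the stock at period $i-1$ becomes a new interior value $s^R_{i-1}=\alpha'-d_{i-1}$; by Definition~\ref{def:block_prod} the order at the (unchanged) period $k$ grows from $X_{ij}$ to $X_{i-1,j}=X_{ij}+\delta$, with $\delta:=d_{i-1}+u_{i-1}-u_{i-2}$ in the case $\alpha=u_{i-1}$; and every stock level at a period in $\{i,\dots,k-1\}$ drops by exactly $\delta$, while the levels and the setups at periods $k,\dots,j$ are untouched. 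Assumption~\ref{obs:ib_prop} gives $\delta\ge0$, with $\delta>0$ exactly when $u_{i-2}<u_{i-1}+d_{i-1}$.

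For item (ii) --- and likewise for item (i) when $\alpha=0$, where $\alpha'=0$ --- I would argue directly from Property~\ref{prop:prod_period}\ref{prop:prod_period_i}: an order inside a block $[i-1,j]^0_\beta$ can only be placed at period $i-1$, whereas $k\ge i>i-1$, and the only alternative (a degenerate zero-order block) requires $i-1=j$, impossible since $i\le j$; hence $\phi^{0\beta}_{i-1,j,k}=+\infty$ whatever the value of $\phi^{0\beta}_{ijk}$. For item (i) with $\alpha=u_{i-1}$, an infinite $\phi^{u_{i-1}\beta}_{ijk}$ signals a violated feasibility inequality of the closed form: the stock does not cover the demand up to $k-1$ ($u_{i-1}\le d_{i,k-1}$), the bound at $k$ cannot absorb the later demand ($u_k<d_{k+1,j}$), or (for $\beta=u_j$) one of the strict inequalities $d_{ij}+u_j>u_{i-1}>d_{i,j-1}$ fails. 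I would then transfer each violation to $[i-1,j]^{u_{i-2}}_\beta$: the $u_k$ condition is literally the same, the boundary conditions transfer the same way, and the one substantive implication, $u_{i-1}\le d_{i,k-1}\Rightarrow u_{i-2}\le d_{i-1,k-1}$, follows from $u_{i-2}\le u_{i-1}+d_{i-1}$ (Assumption~\ref{obs:ib_prop}).

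For item (iii), assume $\phi^{u_{i-1}\beta}_{ijk}<+\infty$. I would first pin down when $[i-1,j]^{u_{i-2}}_\beta$ is itself a feasible block carrying an order at $k$: its new interior level $u_{i-2}-d_{i-1}$ must be admissible, which is exactly $u_{i-2}>d_{i-1,k-1}$ (the binding instance of $u_{i-2}>d_{i-1,n}$ for $i-1\le n\le k-1$) together with $u_{i-2}<u_{i-1}+d_{i-1}$; the post-$k$ demand must still be coverable, i.e.\ $u_k\ge d_{k+1,j}$, which is inherited; and an order must genuinely be needed, which --- the original block being feasible --- is dictated by $d_{ij}+\beta$: if $d_{ij}+\beta>u_{i-1}$ an order was already needed in $[i,j]$ and a fortiori in the larger $[i-1,j]$, whereas if $d_{ij}+\beta=u_{i-1}$ there was no order in $[i,j]$ but $u_{i-2}<u_{i-1}+d_{i-1}=d_{i-1,j}+\beta$ forces one in $[i-1,j]$. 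If any of these fails the extended block cannot support an order at $k$, which is the ``otherwise'' branch $\phi^{u_{i-2}\beta}_{i-1,j,k}=+\infty$. When they all hold, I would sum the cost change term by term along the prepending picture: the enlarged order contributes $p^R_k\delta$; the stock levels at $i,\dots,k-1$ drop by $\delta$, contributing $-\delta\sum_{l=i}^{k-1}h^R_l$; the newly created interior stock at period $i-1$ together with the relocated left-boundary holding charge contribute $h^R_{i-2}u_{i-2}-\delta h^R_{i-1}$, using the identity $u_{i-1}=(u_{i-2}-d_{i-1})+\delta$; the levels and the setup at $k,\dots,j$ are unchanged, except for an extra $f^R_k$ in the case $d_{ij}+\beta=u_{i-1}$. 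Collecting terms yields $\phi^{u_{i-1}\beta}_{ijk}+\Delta_1$ in the first case and $\phi^{u_{i-1}\beta}_{ijk}+\Delta_2=\phi^{u_{i-1}\beta}_{ijk}+f^R_k+\Delta_1$ in the second; the correction is $\beta$-independent because only periods strictly before $k$ are modified (and for $\beta=u_j$ one has $k=j$ by Property~\ref{prop:prod_period}\ref{prop:prod_period_ii}), and the two closed forms coincide on those periods.

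The step I expect to be the main obstacle is the bookkeeping in item (iii): keeping straight which holding costs belong to a block and which to its left boundary --- this is exactly what produces the $h^R_{i-2}u_{i-2}$ against $h^R_{i-1}u_{i-1}$ terms in $\Delta_1$ --- and matching the three sub-cases of the closed forms for $\phi^{u_{i-1}0}$ and $\phi^{u_{i-1}u_j}$ (order needed / not needed, order before $j$ or at $j$) against the correct sub-cases on the $(i-1)$ side. One must also be careful with the small cases $d_{i-1}=0$, $k=i$, $k=j$ and with equality in Assumption~\ref{obs:ib_prop}, and check that the infinite-value cases of $\phi^{u_{i-1}\beta}_{ijk}$ in item (i) that come from $[i,j]^{u_{i-1}}_\beta$ simply not being a block are consistent with the claim. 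Items (i) and (ii) are short once the feasibility inequalities are written out explicitly.
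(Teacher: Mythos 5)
Your proposal is correct and follows essentially the same route as the paper's own proof in Appendix~\ref{appendix:atamturk}: items (i) and (ii) by transferring the feasibility violations (respectively, the impossibility of an order at $k\geq i$ inside a block whose left-boundary stock is $0$) to the prepended subplan $[i-1,j]$, and item (iii) by the same term-by-term accounting of the extra order quantity $u_{i-1}-u_{i-2}+d_{i-1}$ at period $k$, the stocks reduced by that amount on periods $i-1,\ldots,k-1$, and the additional setup cost $f^R_k$ in the case where the original block $[i,j]^{u_{i-1}}_\beta$ carried no order. Your closing remark about the $h^R_{i-2}u_{i-2}$ versus $h^R_{i-1}u_{i-1}$ bookkeeping pinpoints exactly the delicate step, and you resolve it with the same boundary-holding convention the paper's proof uses.
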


\par 
The ordering quantity $X_{i-1,j}$ of a block $[i-1,j]^{u_{i-2}}_\beta$ consists of the ordering quantity  $X_{ij}$ in a block $[i,j]^{u_{i-1}}_\beta$ and the quantity $u_{i-1} - u_{i-2} + d_{i-1}$.
This implies that by replacing $X_{i-1,l}$ in the definition of the cost $w^{\alpha\gamma\beta}_{t,i-1,j,k}$ by $X_{il} + u_{i-1} - u_{i-2} + d_{i-1}$, the cost $w^{\alpha\gamma\beta}_{t,i-1,j,k}$ can be computed from $w^{\alpha\gamma\beta}_{tijk}$ independently of period $l$ by using the observation above. Therefore, 
for all $1 \leq i \leq T$ and given $k,t,j$, with $i \leq k \leq j\leq T$ and $1\leq t\leq j$, for $\alpha \in \{0,u_{i-1}\}$, $\gamma \in \{0,u_{l^*}\}$ and $\beta \in \{0,u_j\}$, the cost $w^{\alpha\gamma\beta}_{tijk}$ can be done in $\mathcal{O}(T)$ using the following equations:  
\begin{enumerate}[label=(\roman*)]
\item $w^{0\gamma\beta}_{t,i-1,j,k} = +\infty$ for any value of $w^{0\gamma\beta}_{tijk}$ 
\item $w^{u_{i-2}\gamma\beta}_{t,i-1,j,k}= 
\left\{
\begin{aligned}
& w^{u_{i-1}\gamma\beta}_{tijk} + \Delta_1 + \sum_{l=t}^{k-1}h^S_l (u_{i-1} - u_{i-2} + d_{i-1}), \\
& \qquad \textrm{if } u_{i-2} > d_{i-1,k-1} \textrm{ and } d_{il^*} + \beta > u_{i-1} \\
& w^{u_{i-1}\gamma\beta}_{tijk} + \Delta_2 + \sum_{l=t}^{k-1}h^S_l (u_{i-1} - u_{i-2} + d_{i-1}), \\
& \qquad \textrm{if } u_{i-2} > d_{i-1,k-1} \textrm{ and } d_{il^*} + \beta = u_{i-1} \\
& +\infty,  \hspace{1.5cm}\textrm{ otherwise}
\end{aligned}
\right.
$
\end{enumerate}
where $l^*= \mbox{argmin } w^{u_{i-1}\gamma\beta}_{tijk}$. \\

\par Consequently, for all periods $i,k,t,j$ such that $1\leq i \leq k \leq j \leq T$ and $1\leq t \leq j$, the cost $w^{\alpha\gamma\beta}_{tijk}$ can be computed in $\mathcal{O}(T^4)$. This implies that the algorithm which solves the \dulsibr problem runs in $\mathcal{O}(T^4)$.

\section{The \dulsibs problem} \label{sec:2uls_ibs}

\par Jaruphongsa \etal \cite{Jaruphongsa2004} propose a polynomial time algorithm to solve the \dulsibs problem with demand time window constraints and stationary inventory bounds. 
They consider that $h^S \leq h^R$ and that the fixed ordering cost and the unit ordering cost are decreasing. 
These specific costs make the problem solvable in polynomial time. 
In this section, we consider the \dulsibs problem under a general cost structure and we prove that this problem is NP-hard.

\begin{thm} \label{thm:dulsibs_np_diff}
The \dulsibs problem is NP-hard.
\end{thm}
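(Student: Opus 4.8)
The plan is to prove NP-hardness by a polynomial-time reduction from \textsc{Partition}: given positive integers $a_1,\dots,a_n$ with $\sum_{i=1}^{n} a_i = 2A$, decide whether there is a set $S\subseteq\{1,\dots,n\}$ with $\sum_{i\in S} a_i = A$. Reducing from \textsc{Partition} (a weakly NP-complete problem) is the natural target, since it is consistent with the pseudo-polynomial algorithm obtained for the more general \dulsibsr problem in Section~\ref{sec:2uls_ibsr}.

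The construction should create a conflict between the two levels that is resolved exactly by a balanced partition. The supplier is forced to strongly prefer two ordering periods, a cheap one early and an expensive one late: set $f^S_1 = 0$, $f^S_t = +\infty$ for $2\le t\le T-1$, $p^S_1 = 0$, $p^S_T = P$ with $P$ a large constant, and $h^S = 0$. With a stationary supplier bound $u^S_t = A$, the relation $s^S_1 = x^S_1 - x^R_1 \le A$ together with the supplier balance constraints~(\ref{eq:flow1}) makes the amount the supplier can route through the cheap period-$1$ order depend on how much the retailer is willing to order early; the deficit must be produced at period $T$ at unit cost $P$. On the retailer side the $n$ demands are placed on distinct periods (one per item) with no demand at period~$1$, and the retailer costs are chosen so that (i) pulling item $i$'s demand forward to period~$1$ incurs a holding cost proportional to $a_i$ and \emph{independent of which period $i$ sits on} --- achievable with a single nonzero holding cost $h^R_1 = H$ and $h^R_t = 0$ for $t\ge 2$, so that the total retailer holding cost equals $H\,s^R_1$ --- and (ii) ordering early is made unattractive enough, through the retailer's fixed and unit ordering costs, that an optimal solution pulls forward only what it strictly must. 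Taking the threshold $K$ to be the value of the solution that pulls forward exactly $A$ units, the intended equivalence is: a feasible solution of cost at most $K$ exists if and only if some subset of the $a_i$'s sums to $A$. Informally, at least $A$ units must be routed early (otherwise $s^S_1>A$, or the solver pays for expensive period-$T$ production, a net increase of $\delta(P-H)$ for a shortfall $\delta$), whereas pulling more than $A$ units forward only increases the retailer holding cost without any compensating saving, so the cheapest feasible solution pulls forward a subset summing to exactly $A$ when one exists.

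For the completeness direction, given a partition $S$ I would exhibit the solution that orders $\sum_{i\in S} a_i = A$ at period~$1$ and stores it at the retailer, produces $A$ units at period~$1$ and $A$ units at period~$T$ at the supplier, ships the deliveries so that $s^S_t\le A$ holds throughout, and check that its cost is exactly $K$. The soundness direction is the crux and the main obstacle. Because demand lot-splitting is allowed, a priori the retailer could pull forward an arbitrary fraction of each demand, or merge several demands into one order, so the "subset pulled forward" is not well defined a priori; the heart of the argument is a structural lemma showing that any solution of cost at most $K$ may be assumed to order each individual demand from a single period, and in particular that its period-$1$ retailer order has integral total weight, which the supplier bound then forces to equal $A$. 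Establishing this lemma is where the calibration of the retailer's fixed and unit ordering costs (large relative to the unit holding terms but small relative to $P$) does its work, and it is convenient to first restrict to extreme-point solutions via the characterization behind Theorem~\ref{thm:one_prod_period} (now with $s^S$ bounded), or directly via Zangwill's ZIO property at both levels, before running the counting argument. A secondary point to check is that the supplier's ZIO property --- which requires everything produced at period~$1$ to be dispatched before period~$T$ --- does not create spurious infeasibilities, which is handled by the placement of the demand periods together with the size of $P$.
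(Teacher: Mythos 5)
Your reduction is a plan rather than a proof, and the step you yourself identify as ``the crux and the main obstacle'' --- the structural lemma forcing the period-$1$ retailer order to be an integral combination of whole demands summing to exactly $A$ --- is precisely where the construction breaks down as calibrated. Since lot-splitting is allowed, a solution can pull forward an \emph{arbitrary fractional} amount of each demand; the only mechanism in your construction that ties the pulled-forward quantity to the combinatorics of the $a_i$'s is the retailer's fixed ordering cost $F$ at the demand periods (a demand's fixed cost is saved only if it is pulled forward \emph{in full}). Working out the resulting objective, a solution pays roughly $F\cdot|\{i:\ d_i \text{ not fully pulled forward}\}| + H\cdot x^R_1 + P\cdot\max(0,A-x^R_1)$, and this trade-off between $F$ per completed demand and $H$ per unit does not isolate subsets summing to exactly $A$: if $F$ is large relative to $H\max_i a_i$ the optimum pulls \emph{all} demands forward; if $F$ is small the optimum may prefer a larger subset $S'$ with $\sum_{i\in S'}a_i>A$ whenever the extra holding cost $H(\sum_{i\in S'}a_i-A)$ is outweighed by the additional fixed costs saved. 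So for an arbitrary threshold $K$ the claimed equivalence with \textsc{Partition} fails in the soundness direction, and no choice of the ``calibration'' you allude to obviously repairs it. Appealing to extreme points or ZIO does not help either, since extreme points of the bounded-inventory polyhedron need not order each demand from a single period (this is exactly why the paper must treat the NLS variant separately in Section~\ref{sec:2uls_nls}).

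The paper's proof avoids this difficulty with a different gadget: it reduces from subset sum using $2n+1$ periods with all demand concentrated at the last period, supplier inventory bound $a_i$ and supplier unit cost $1-1/a_i$ at the period associated with item $i$, and fixed costs that force the supplier to order only at odd periods and the retailer only at even ones. The point is that ordering $x$ units at item $i$'s period costs $1+(1-1/a_i)x$, whose \emph{average per-unit cost} equals $1$ exactly when $x=a_i$ (the inventory bound) and exceeds $1$ for any $0<x<a_i$; hence total cost $\le S$ forces every supplier order to saturate its bound, and the selected items form the desired subset. The integrality you are missing is thus built into the cost function itself rather than extracted from a structural lemma. If you want to salvage your architecture, you would need an analogous per-unit pricing trick (or per-demand bound) making partial pull-forward strictly unprofitable per unit --- the fixed-cost accounting alone does not do it.
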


\begin{proof} 
\par We prove that the \dulsibs problem is NP-hard through a reduction from the subset sum problem, which is an NP-complete problem~\cite{gareyjohnson}. An instance of the subset sum problem is given by an integer $S$ and a set $\mathcal{S}$ of $n$ integers $(a_1,\ldots,a_n)$. The question is: does there exist a subset $\mathcal{A} \subseteq \mathcal{S}$ such that $\sum_{a_i \in \mathcal{A}} a_i = S$?

\par We transform an instance of the subset sum problem into an instance of the \dulsibs problem in the following way:
\begin{itemize}
\item[-] $T = 2n + 1$. Let us denote by $\mathcal{T}_1$ (resp. $\mathcal{T}_2$) the set of odd (resp. even) periods in the set $\{1,\ldots,2n\}$. 
\item[-] $d_t = 0$ for all $t \in \mathcal{T}_1 \cup \mathcal{T}_2$, $d_T = S$
\item[-] $f^S_t = 1$ for all $t \in \mathcal{T}_1$, $f^S_t = 2S$ for all $t \in \mathcal{T}_2 \cup \{T\}$ \\
$f^R_t = 2S$ for all $t \in \mathcal{T}_1 \cup \{T\}$, $f^R_t = 0$ for all $t \in \mathcal{T}_2$
\item[-] $h^S_t = h^R_t = 0$ for all $t \in \{1,\ldots,T\}$
\item[-] $p^R_t = 0$ for all $t \in \{1,\ldots,T\}$ \\
$p^S_t = 1-1/a_{\left\lceil \frac{t}{2} \right\rceil} $ for all $t \in \mathcal{T}_1 \cup \mathcal{T}_2$, $p^S_T = 0$
\item[-] $u^S_t = a_{\left\lceil \frac{t}{2} \right\rceil}$ for all $t \in \mathcal{T}_1 \cup \mathcal{T}_2$, $u^S_T = 0$
\end{itemize}
A representation of this instance is given in Figure \ref{fig:dulsibs_proof_inst}. The fixed ordering costs and the unit ordering costs of the supplier (resp. retailer) are indicated at the top (resp. bottom). At the supplier level, the quantities on the horizontal edges represent the inventory bounds.
\begin{figure}[H]
    \centering
        \includegraphics[clip=true, scale = 0.95]{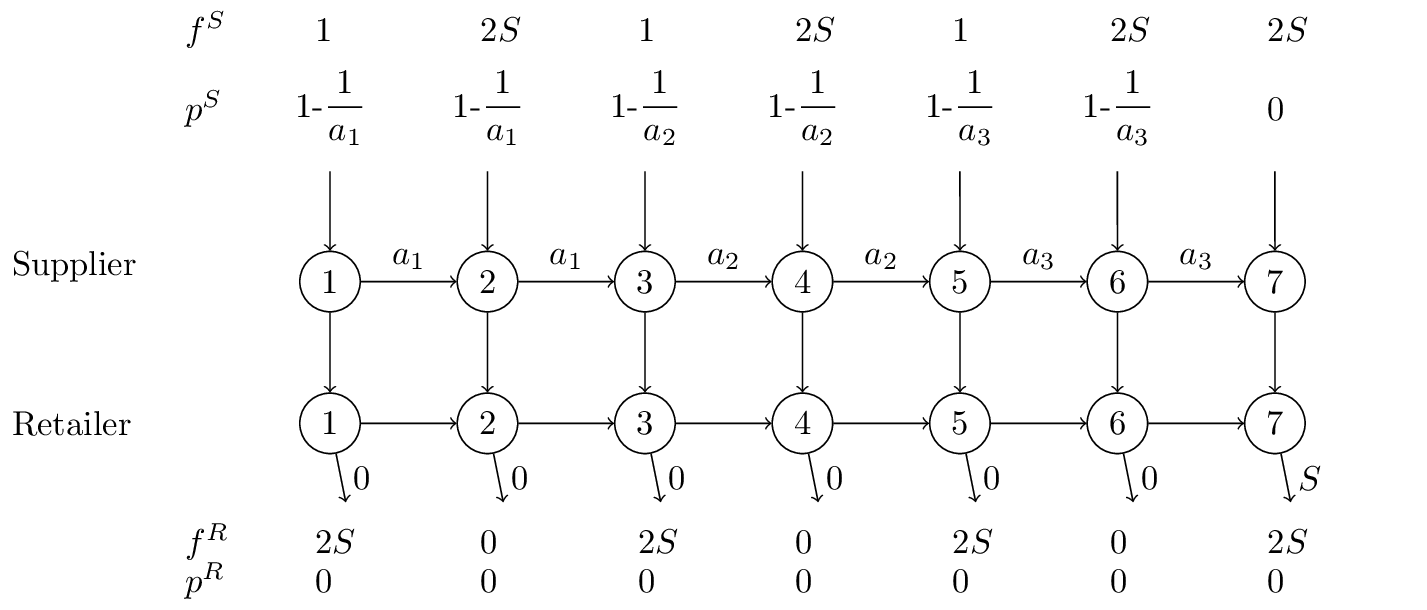}
        \caption{Instance $\mathcal{A}$ of the 2ULS-IB$_S$ problem in the proof of Theorem \ref{thm:dulsibs_np_diff} with $n=3$, $\mathcal{S} = \{a_1,a_2,a_3\}$.}
        \label{fig:dulsibs_proof_inst}
\end{figure}

\begin{obs} \label{obs:dulsibs_proof}
\normalfont  
Note that if we order $x^S_t = a_{\left\lceil \frac{t}{2} \right\rceil}$ at period $t \in \mathcal{T}_1$ then the total ordering cost is equal to $f^S_t + p^S_tx^S_t = 1 + (1-1/a_{\left\lceil \frac{t}{2} \right\rceil})a_{\left\lceil \frac{t}{2} \right\rceil}$ which is exactly equal to $x^S_t$ (in this case, the average  cost of ordering one unit is equal to 1). If $x^S_t < a_{\left\lceil \frac{t}{2} \right\rceil}$ at period $t \in \mathcal{T}_1$, then we have that the total ordering cost $f^S_t + p^S_t x^S_t = 1+ (1-1/a_{\left\lceil \frac{t}{2} \right\rceil})x^S_t =  x^S_t + 1-x^S_t/a_{\left\lceil \frac{t}{2} \right\rceil} > x^S_t$ (in this case, the average cost of ordering one unit is larger than 1). From this observation, let us prove that there exists a solution for the \dulsibs problem of cost at most $S$ if and only if there exists a solution for the subset sum problem.
\end{obs}

\par Assume that there exists a solution $\mathcal{A}$ of the subset sum problem. The following solution for the \dulsibs problem is of cost at most $S$: for each element $a_i$ in the set $\mathcal{A}$, the supplier orders a quantity $a_i$ at period $t=2i-1$ and store it until period $t+1$ (see Figure~\ref{fig:dulsibs_proof_sol}). The inventory bound is not exceeded since it is exactly equal to $a_i$. From Observation~\ref{obs:dulsibs_proof} above, the cost of ordering $a_i$ units for each $a_i \in \mathcal{I} $ at the supplier level is equal to $a_i$. Since $\sum_{a_i \in I} a_i = S$, the total cost at the supplier level is $S$. At period $t=2i$, 
the retailer orders all the units and store them until period $T$. Since $f^R_t = 0$ for all $t \in \mathcal{T}_2$ and $h^R_t = p^R_t = 0$ for all $t \in \{1,\ldots,T\}$, the total cost at the retailer level is equal to 0. So, there exists a solution for the \dulsibs problem of cost $S$.
\begin{figure}[H]
    \centering
        \includegraphics[clip=true, scale = 0.95]{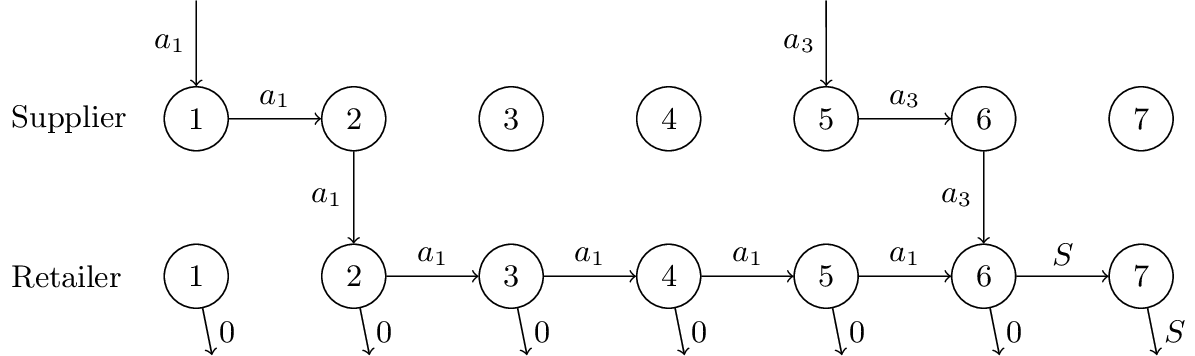}
        \caption{Solution for the 2ULS-IB$_S$ problem in the proof of Theorem \ref{thm:dulsibs_np_diff} with $n=3$, $\mathcal{S} = \{a_1,a_2,a_3\}$ and $a_1+a_3 = S$.}
        \label{fig:dulsibs_proof_sol}
\end{figure}

\par Assume that there exists a solution for the \dulsibs problem with a cost of at most $S$. Since $f^S_t = 2S$ for all $t \in \mathcal{T}_2$, the supplier has to order at period $t \in \mathcal{T}_1$, otherwise the cost will exceed $S$. Likewise, since $f^R_t = 2S$ for all $t \in \mathcal{T}_1$, the retailer has to order at period $t \in \mathcal{T}_2$. In order to not exceed the inventory bounds, the supplier can store at most $u^S_t = a_{\left\lceil \frac{t}{2} \right\rceil}$ units from period $t$ to period $t+1$. Thus, the quantity ordered by the supplier at period $t\in \mathcal{T}_1$ is at most $a_{\left\lceil \frac{t}{2} \right\rceil}$. At period $t \in \mathcal{T}_2$, the retailer orders the units in the supplier's inventory and stores them until period $T$ 
with a cost equal to 0. From Observation~\ref{obs:dulsibs_proof}, if the supplier orders at period $t$, then $x^S_t = a_{\left\lceil \frac{t}{2} \right\rceil}$ 
(this is the only way to order one unit with a cost of at most 1 so that the total cost is at most $S$). Thus, $S = \sum_{t \in \mathcal{T}} a_{\left\lceil \frac{t}{2} \right\rceil}$ where $\mathcal{T}$ is the set of periods where the supplier orders. This implies that there exists a solution to the subset sum problem.
\end{proof}

\par
The related lot-sizing problem with production capacity constraints instead of inventory bounds has been proved to be NP-hard~\cite{Florian1980}. 
It is worth noticing that in the proof of Theorem~\ref{thm:dulsibs_np_diff}, the inventory bound at period $t$ acts as a production capacity since the supplier cannot 
supply at $t$ the ordered units.

\section{The \dulsibsr problem}\label{sec:2uls_ibsr}
\par We have proved that the \dulsibs problem is NP-hard. By setting $u^R_t = \sum_{t=1}^T d_t$, we can transform an instance of the \dulsibs problem into an instance of the \dulsibsr problem. 
Thus, the \dulsibsr problem is at least as hard as the \dulsibs problem. 
In this section, we describe a pseudo-polynomial dynamic programming algorithm to solve the \dulsibsr problem.
This proves that this problem is not strongly NP-hard.

\par  Let $s^R_{t} \in \{0,1,\ldots,u^R_{t}\}$ (resp. $s^S_{t}\in \{0,1,\ldots,u^S_{t}\}$) be the inventory quantity available at the end of period $t$ at the retailer (resp. supplier) level.
The principle of the algorithm is to consider all the possible values of the inventory quantity $s^R_t$ (resp. $s^S_t$) at the retailer (resp. supplier) level. 
Notice that the ZIO property does not hold neither at the supplier nor at the retailer levels for the \dulsibsr problem. 

\par Let $C^i_t(X)$ be the cost of ordering $X$ units at level $i \in \{R,S\}$ at period $t$, where the level $R$ (resp. $S$) corresponds to the retailer (resp. supplier) level.
The cost $C^i_t(X)$ is given by:
$$
C^i_t(X) = \left\{
\begin{aligned}
f^i_t + p^i_t X, & \mbox{ if } X > 0 \\
0, & \mbox{ otherwise.}
\end{aligned}
\right.
$$ 

\par We define $V_t(s^R_{t-1},\bar{s},s^S_{t-1},s)$ as the cost of satisfying the demand $d_t$ when:
\begin{itemize}
\item[-] $s^R_{t-1}$ (resp. $s^S_{t-1}$) units are stored at period $t-1$ and $\bar{s}$ (resp. $s$) units are stored at period $t$ at the retailer (resp. supplier) level,
\item[-] $X^R=\bar{s}+d_t-s^R_{t-1}$ (resp. $X^S = s + X^R - s^S_{t-1})$ units are ordered at period $t$ at the retailer (resp. supplier) level. 
\end{itemize}  
The cost $V_t(s^R_{t-1},\bar{s},s^S_{t-1},s)$ is defined by:
$$
V_t(s^R_{t-1},\bar{s},s^S_{t-1},s) = \left\{
\begin{aligned}
C^R(X^R) + C^S(X^S), & \mbox{ if } \bar{s} \leq u^R_t, s \leq u^S_t \mbox{ and } \bar{s}+s \leq d_{tT}\\
+\infty, & \mbox{ otherwise.}
\end{aligned}
\right.
$$ 

\par Let $H_t(s^R_{t-1},s^S_{t-1})$ be the minimum cost of satisfying the demands $d_{tT}$ where $s^R_{t-1}$ (resp. $s^S_{t-1}$) units are stored at period $t-1$ at the retailer (resp. supplier) level. 
From the definition of the cost $V_t(s^R_{t-1},\bar{s},s^S_{t-1},s)$, we can compute the cost $H_t(s^R_{t-1},s^S_{t-1})$ as follows: 
$$
H_t(s^R_{t-1},s^S_{t-1})=\min_{\bar{s}\in \mathcal{S}^R_{t}, s \in \mathcal{S}^S_{t} } \{ V_t(s^R_{t-1},\bar{s},s^S_{t-1},s) + H_{t+1}(\bar{s},s) \},
$$
where $\mathcal{S}_{t}^R = \{ \max(0,s^R_{t-1}-d_t),\ldots, M^R_t\}$, with $M^R_t = \min(u^R_t,d_{tT})$, and $\mathcal{S}_{t}^S = \{ \max(0,s^S_{t-1}-X^R),\ldots, M^S_t\}$, with $M^S_t = \min(u^S_t,d_{tT})$. 
\\

\noindent \textbf{Optimal cost} 
\par The optimal cost of satisfying the demands $d_{1T}$ assuming that $s^R_0 = s^S_0 = 0$ is given by $H_{1}(0,0)$. 
We initialize the recursion by setting $H_{T+1}(s^R_{t},s^S_{t}) = 0$  for all the values $s^R_{t-1}$ and $s^S_{i-1}$ ensuring feasibility. \\

\noindent \textbf{Complexity analysis} 
\par Computing the cost $V_t(s^R_{t-1},\bar{s},s^S_{t-1},s)$ can be done in $\mathcal{O}(u^S_{t-1} u^R_{t-1} M^R_t M^S_t)$ for each period $t$. 
Therefore, it takes $\mathcal{O}(\sum_{i=1}^T (u^S_{i-1} u^R_{i-1} M^R_i M^S_i))$ to compute the optimal cost $H_1(0,0)$. 
This bound constitutes the complexity of the dynamic programming algorithm.
This is pseudo-polynomial, implying that the \dulsibsr problem is not strongly NP-hard. 
\\

\par In the next section, we consider the 2ULS problems with inventory bounds assuming that the demand at the retailer level has to be covered by a single order.

\section{Analysis of lot-sizing problems without lot-splitting} \label{sec:2uls_nls}
\par 
aruphongsa \etal \cite{Jaruphongsa2004} introduce the
problem where each demand must be satisfied by exactly one dispatch, \ie the demand lot-splitting is not allowed at the retailer level. 
We called this constraint the No Lot-Splitting (NLS) constraint. 
In practice, this study is motivated by traceability requirements for the product where the management of the inventory and the transport can be improved 
if the demand is supplied from the supplier to the retailer by a single delivery. 
We note $x^R_{kt}\geq 0$ the quantity of demand $d_t$ which is ordered at period $k$ to satisfy a demand $d_t$ at the retailer level. We have $\sum_{i=1}^t x^R_{it} = d_t$.

\begin{defn}[NLS constraint] \label{def:nls}
An ordering plan $x^R$ fulfills the NLS constraint if there does not exist two periods $l$ and $k$ with $l < k \leq t$ such that $x^R_{lt} > 0$ and $x^R_{kt} > 0$ for all periods $t$.
\end{defn}

\par The \dulsibr and the \dulsibs problems with the NLS constraint are denoted by \dulsibrNLS and \dulsibsNLS respectively. 
Before studying the complexity of the latter problems, it is interesting to analyze the complexity of the single level problem with NLS constraint,
that we denote by \ulsNLS.

\subsection{The \ulsNLS problem} \label{sec:uls_nls}
\par We consider $T$ periods $\{1,\dots,T\}$. In the  \ulsNLS problem, ordering units at period $t$ induces a fixed ordering cost $f_t$ and a unit ordering cost $p_t$. Carrying units from period $t$ to period $t+1$ induces a holding cost $h_t$. The total cost is given by the sum of the ordering and holding costs. The aim is to determine an ordering plan which satisfies the demands and which minimizes the total cost. We denote by $x_t$ the ordering quantity at period $t$, $s_t$ the inventory quantity at the end of period $t$ and $y_t$ the binary (setup) variable which is equal to 1 if there is an order at period $t$ and 0 otherwise. 
We say that the inventory bound is \textit{stationary} if $u_t$ is constant throughout the planning horizon. 

\begin{thm} \label{thm:uls_ib_nls}
The \ulsNLS problem is strongly NP-hard, even if the inventory bound is stationary.
\end{thm}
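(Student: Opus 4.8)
The plan is to reduce from a strongly NP-complete problem with unary-bounded integers, most naturally \textsc{3-Partition}: given $3m$ integers $b_1,\dots,b_{3m}$ with $\sum_i b_i = mB$ and $B/4 < b_i < B/2$ for every $i$, decide whether the $b_i$ can be partitioned into $m$ triples each summing exactly to $B$. Since the $b_i$ are polynomially bounded, a reduction that encodes them directly as demands (or as inventory-bound / cost magnitudes) will establish strong NP-hardness, and a stationary bound will be consistent with the claim.

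The construction I would try uses $3m$ ``item periods'' and $m$ ``slot periods'', interleaved so that the NLS constraint forces each demand to be served from exactly one order, and so that the stationary inventory bound $u = B$ caps how much can be carried across a slot boundary. Concretely: set $u_t = B$ for all $t$; create one unit demand of size $b_i$ associated with item $i$, placed so that serving it requires routing $b_i$ units through a particular ``gadget segment''; and create $m$ segments each able to absorb exactly $B$ units of flow before hitting the bound. Ordering costs are chosen so that the cheap solution opens exactly one order per segment and ships $B$ units through it, which is possible iff the $b_i$ feeding that segment sum to exactly $B$ --- i.e., iff they form a valid triple. The NLS constraint is what prevents a demand $b_i$ from being split across two segments, which is exactly the mechanism that makes the packing rigid; without NLS the problem would reduce to a flow/bin-covering relaxation that need not be hard. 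The key quantitative check is a threshold argument: I would show the instance admits a solution of cost at most some value $K$ if and only if a valid 3-partition exists, by arguing that any cheaper-than-$K$ solution must use exactly one order per segment (fixed costs forbid two) and must saturate the bound on each segment (unit costs forbid leaving slack), and then that saturation plus NLS forces a triple decomposition.

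The main obstacle I expect is engineering the gadget so that the inventory bound --- rather than a production/order capacity --- is the binding constraint, since Theorem~\ref{thm:dulsibs_np_diff} already exploits that an inventory bound can \emph{simulate} a capacity, and here we have only a single level and no second echelon to hide structure in. I would handle this by spacing the demands in time so that the only feasible way to have $b_i$ units ``waiting'' to be consumed is to carry them in inventory across the relevant slot period, making $s_t \le B$ the genuine bottleneck. A secondary technical point is verifying that the NLS constraint, combined with the demand sizes satisfying $B/4 < b_i < B/2$, forces exactly three demands per saturated segment (two is too little, four is too much), which mirrors the classical \textsc{3-Partition} argument and is where the unary encoding of the $b_i$ gets used. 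Finally I would confirm the reduction is polynomial: $O(m)$ periods and all parameters bounded by $O(mB)$, which is polynomial in the unary size of the \textsc{3-Partition} instance, giving strong NP-hardness.
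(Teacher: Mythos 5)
Your choice of source problem (3-Partition) and the intended mechanism---a stationary inventory bound metering how much each cheap ordering opportunity can contribute, with the NLS constraint forcing each demand to be charged to a single opportunity, hence a partition into groups summing to $B$---are exactly those of the paper's proof, and your threshold argument and polynomiality check are stated correctly. However, the reduction's central gadget is never actually constructed, and the version you sketch would fail. First, interleaving the $3m$ item periods with the $m$ slot periods pre-constrains the assignment: in a single-level problem an order at period $k$ can only serve demands at periods $t\ge k$, so a demand $b_i$ placed between slots $j$ and $j+1$ could only be assigned to slots $1,\dots,j$, whereas 3-Partition needs the item-to-slot assignment to be completely free. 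Second, with a stationary bound $u=B$ and cheap orders only at the $m$ slot periods, at most $B$ units can ever be carried across any single period, so $mB$ units of item demand cannot all be served from those $m$ orders unless the item demands are interspersed among the slots---which brings back the first problem. Saying you would ``space the demands in time so that $s_t\le B$ is the genuine bottleneck'' restates the goal rather than resolving this tension.

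The idea you are missing is the paper's dummy-demand gadget: place all $3m$ item demands \emph{after} all $m$ cheap ordering periods, take $u_t=mb$ for all $t$ (still stationary), and insert large decreasing dummy demands $d_{2j}=(m-j)b$ between consecutive cheap ordering periods $1,3,\dots,2m-1$. The bound then gives $s_{2j}\le jb$ by induction, and since $mb$ units must survive to period $2m$ to cover the tail demands at zero cost, every one of these inequalities is tight; hence each cheap order contributes exactly $b$ net units to the tail, and NLS converts this into a partition of the $a_i$ into $m$ groups each summing to $b$. Two smaller points: you do not need to argue that each group has exactly three elements (groups summing to $b$ already solve 3-Partition, the triple property being automatic from $b/4<a_i<b/2$); and strong NP-hardness follows simply because all constructed parameters are polynomial in $m$ and $b$, not from any separate use of unary encoding inside the gadget.
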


The detail of the proof can be found in the appendix.
The proof is based on a reduction from the 3-Partition problem which is strongly NP-hard~\cite{gareyjohnson}. 

\par Note that \ulsib problem can be solved in polynomial time \cite{AtamturkAlgo,Love1973}. Theorem \ref{thm:uls_ib_nls} shows that adding the NLS constraint to this problem makes it strongly NP-hard. 

\subsection{The \dulsibrNLS problem} \label{sec:duls_ibr_nls}

\begin{thm} \label{thm:duls_ibr_nls}
The \dulsibrNLS problem is strongly NP-hard, even if the inventory bound is stationary.
\end{thm}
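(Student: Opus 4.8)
The plan is to prove the statement by a polynomial (indeed linear-time) reduction from the \ulsNLS problem, which is strongly NP-hard even with a stationary inventory bound by Theorem~\ref{thm:uls_ib_nls}. The underlying idea is that a \dulsibr instance in which the supplier is ``free'' collapses to a single-level \ulsib instance on the retailer level, and this collapse is compatible with the NLS constraint, which lives only at the retailer level.

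Concretely, given a \ulsNLS instance with horizon $T$, demands $d_t$, costs $f_t,p_t,h_t$ and stationary inventory bound $u$, I would build a \dulsibrNLS instance on the same horizon by placing the \ulsNLS data at the retailer level and zeroing out the supplier: set $d_t$ as given, $f^R_t=f_t$, $p^R_t=p_t$, $h^R_t=h_t$, $u^R_t=u$ for every $t$, and $f^S_t=p^S_t=h^S_t=0$ for every $t$. Since a \dulsibr instance has no inventory bound at the supplier, the supplier is unconstrained, and the constructed instance still has a stationary retailer inventory bound.

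Next I would argue that the two instances have the same optimal value. For one direction, any feasible ordering plan $x^R$ (with its splitting variables $x^R_{kt}$) for the \ulsNLS instance lifts to a feasible \dulsibrNLS solution by letting the supplier order exactly the retailer's per-period quantity, $x^S_t=x^R_t$ and $s^S_t=0$: this satisfies Constraints~(\ref{eq:flow1}) and is always feasible because the supplier has no bound, and since all supplier costs vanish the total cost equals the retailer cost, i.e.\ the \ulsNLS cost of $x^R$. For the converse, in any feasible \dulsibrNLS solution the retailer component $(x^R,s^R,y^R)$ together with the splitting variables $x^R_{kt}$ satisfies exactly the \ulsNLS constraints — the flow balance~(\ref{eq:flow2}), the bound $s^R_t\le u$ from~(\ref{ibr_const}), and the NLS constraint of Definition~\ref{def:nls} — and, as supplier costs are nonnegative, the objective is at least the retailer cost, which is the \ulsNLS cost of that retailer plan. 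Hence the optima coincide, and strong NP-hardness of \dulsibrNLS with a stationary inventory bound follows from Theorem~\ref{thm:uls_ib_nls}.

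There is essentially no computational content beyond copying data, so I do not expect a genuine obstacle; the only points needing care are the routine verifications that feasibility and the NLS property transfer faithfully in both directions, and that the strengthened coefficient $M^R_t=\min(d_t+u,d_{tT})$ in Constraint~(\ref{lien_xy2}) is a valid bound that does not exclude any retailer plan which is optimal for the \ulsNLS instance (it does not, since $x^R_t=d_t+s^R_t-s^R_{t-1}\le d_t+u$ and an optimal plan never orders more than $d_{tT}$ in period $t$).
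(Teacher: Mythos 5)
Your reduction is exactly the paper's: place the \ulsNLS data at the retailer level, zero out all supplier costs, and observe that the optima coincide, so strong NP-hardness follows from Theorem~\ref{thm:uls_ib_nls}. Your write-up merely spells out the two directions of the cost equivalence in more detail than the paper does; the argument is correct and essentially identical.
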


\begin{proof}
We do a reduction from the \ulsNLS problem, that is strongly NP-hard, as shown by Theorem~\ref{thm:uls_ib_nls}. 
We transform an instance of the \ulsNLS problem into the following instance of the \dulsibrNLS problem. The costs of the retailer are the ones
of the \ulsNLS problem, \ie $u^R_t = u_t, f^R_t = f_t, p^R_t = p_t$ and $h^R_t = h_t$ for all $t \in \{1,\ldots,T\}$. The supplier costs 
are given by $f^S_t = h^S_t = p^S_t = 0$ for all $t \in \{1,\ldots,T\}$. The demands are the same than the ones of the \ulsNLS problem. 
Since all the supplier's costs are 0, the cost of an optimal solution for the \ulsNLS problem is equal to the optimal cost of its corresponding \dulsibrNLS instance (see Figure \ref{fig:dulsibr_nls_sol}). 
\begin{figure}[H]
    \centering
        \includegraphics[clip=true, scale = 0.85]{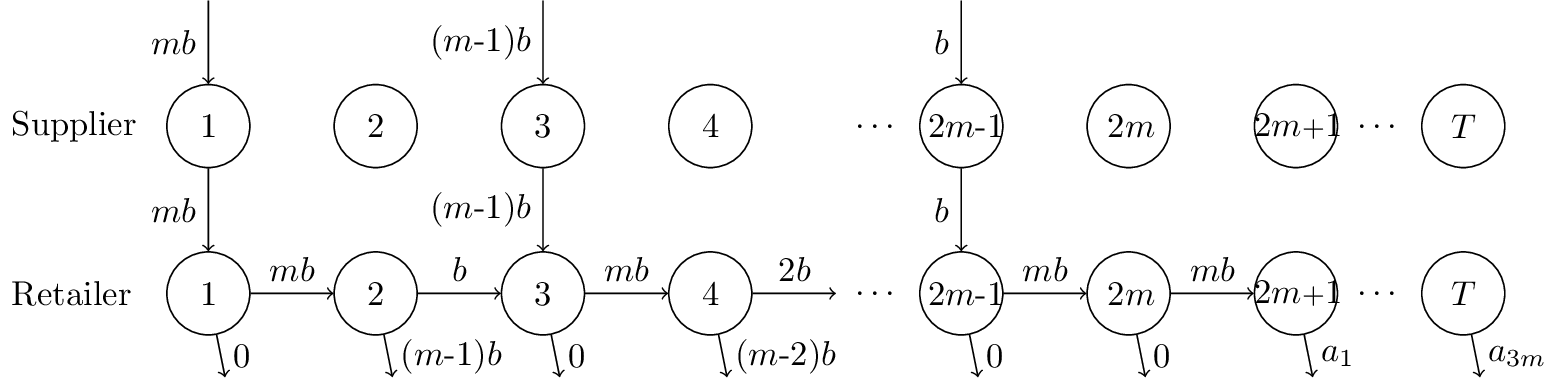}
        \caption{Solution for the \dulsibrNLS problem in the proof of Theorem \ref{thm:duls_ibr_nls}.}
        \label{fig:dulsibr_nls_sol}
\end{figure}

\par By Theorem \ref{thm:uls_ib_nls}, the \dulsibrNLS problem is also strongly NP-hard.
\end{proof}

\subsection{The \dulsibsNLS problem}  \label{sec:duls_ibs_nls}
\par Jaruphongsa \etal \cite{Jaruphongsa2004} prove that the \dulsibsNLS problem with demand time window constraints is weakly NP-hard. We show that this problem is also weakly NP-hard without demand time window constraints, and that it is even strongly NP-hard.

\begin{thm} \label{thm:duls_ibs_nls}
The 2ULS-IB$_{S}$-NLS is strongly NP-hard, even if the inventory bound is stationary.
\end{thm}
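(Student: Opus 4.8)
The plan is to give a reduction from the 3-Partition problem, which is strongly NP-hard, mirroring the style of the proof of Theorem~\ref{thm:uls_ib_nls} but adapted to exploit the supplier-level inventory bound rather than a single-level bound. Recall a 3-Partition instance: $3m$ integers $a_1,\dots,a_{3m}$ with $\sum_i a_i = mB$ and $B/4 < a_i < B/2$ for each $i$; the question is whether the $a_i$ can be partitioned into $m$ triples each summing exactly to $B$. The key feature I want to reproduce is that the stationary supplier inventory bound should behave, period by period, like a capacity that forces the "item quantities" moving across the horizon to pack into $m$ bins of size $B$, while the NLS constraint prevents a demand from being assembled out of several partial shipments (which would destroy the packing structure).

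The construction I would use: take a horizon with one "gadget block" per integer $a_i$ followed by a sequence of $m$ "collection periods" carrying demands of size $B$ each (or, dually, $3m$ unit-type demands, one per integer, each of size $a_i$, all due at distinct late periods, with the bound $u^S$ set to $B$). Concretely, I would set $u^S_t = B$ (stationary) at every period, set all retailer costs and holding costs so that the only nontrivial decisions are at the supplier, give each integer $a_i$ an early "availability" period where a fixed ordering cost of $1$ is incurred if the supplier orders, and make $m$ demands of size $B$ due at the end. Because $B/4 < a_i < B/2$, at most three of the $a_i$-amounts can coexist in the supplier's inventory of capacity $B$ at any time, and to route all $mB$ units to the $m$ demands using only $m$ "batches" one must group the integers into triples summing to exactly $B$. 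I would then show: a YES instance of 3-Partition yields a solution of cost exactly $T^*$ (for an appropriate target $T^*$, e.g. $3m$ supplier setups plus $m$ dispatch setups), and conversely any solution of cost at most $T^*$ must use exactly this many setups, which together with the capacity bound $u^S = B$, the NLS constraint, and the strict inequalities $B/4 < a_i < B/2$ forces a valid triple partition.

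The main steps, in order: (1) state the 3-Partition instance and define all parameters $T$, $d_t$, $f^R,p^R,h^R$, $f^S,p^S,h^S$, $u^S$, together with a cost target $T^*$; (2) prove the "only if" direction — a 3-Partition solution gives a feasible 2ULS-IB$_S$-NLS solution of cost $T^*$ (straightforward: order each $a_i$ at its gadget period, hold it under the bound, dispatch each triple together to one of the $B$-demands respecting NLS); (3) prove the "if" direction — from any feasible solution of cost $\le T^*$, use the cost accounting to deduce the exact number of supplier orders and retailer dispatches, then use $u^S = B$ plus $B/4 < a_i < B/2$ plus NLS to argue that the flow decomposes into $m$ groups of exactly three integers each summing to $B$; (4) observe that the reduction is polynomial and that all numbers are polynomially bounded in the (unary) size of the 3-Partition instance, so strong NP-hardness transfers. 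The hard part will be step (3): making the cost bookkeeping tight enough that "cost $\le T^*$" rigidly forces the partition structure — in particular ruling out solutions that merge several $a_i$ into a single supplier order while still staying under the stationary bound over time, and ruling out solutions that split a $B$-demand across time in a way that circumvents NLS by re-timing orders. I would handle this by choosing the fixed costs (and possibly a small penalty encoded in $p^S$, as in Observation~\ref{obs:dulsibs_proof}) so that any deviation from "one setup per $a_i$, one dispatch per triple" strictly increases the cost beyond $T^*$, and by using Assumption-free period-by-period inventory balance together with the bound to pin down that no more than three $a_i$-units-worth of material can ever be in transit simultaneously.
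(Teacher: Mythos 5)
Your plan goes after the theorem the hard way: a fresh reduction from 3-Partition built directly at the supplier level. The paper does something much lighter. Having already established (Theorem~\ref{thm:uls_ib_nls}) that the single-level \ulsNLS problem is strongly NP-hard, it simply embeds that instance at the supplier level: $u^S_t=u_t$, $f^S_t=f_t$, $p^S_t=p_t$, $h^S_t=h_t$, sets $f^R_t=p^R_t=0$ and $h^R_t=M$ for a large $M$ so that no quantity is ever held at the retailer, whence the retailer orders exactly at the demand periods and the two-level instance collapses to the one-level one with the same optimal cost. If you are allowed to invoke Theorem~\ref{thm:uls_ib_nls}, your construction is redundant effort; if you intend your gadget to be self-contained, it must stand on its own, and as written it does not.

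The gap is in your step (3), which you yourself flag as ``the hard part'' and leave undone, and it is not merely a matter of bookkeeping. First, the stationary bound $u^S=B$ together with $B/4<a_i<B/2$ only gives you ``at most three items, at most $B$ units, in the supplier's inventory at any time''; it does not force each batch to sum to \emph{exactly} $B$. To turn ``at most'' into ``exactly'' you need an additional mechanism that pins the inventory to the bound at prescribed periods --- this is precisely what the staircase demands $d_t=(m-t/2)b$ at the even periods accomplish in the proof of Theorem~\ref{thm:uls_ib_nls}, and your sketch has no analogue of it. Second, you have not resolved the interaction between the two levels: the NLS constraint in this paper restricts \emph{retailer} orders, while your capacity argument lives at the supplier. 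If retailer holding is free, the supplier can immediately push each $a_i$ down to the retailer and the bound $u^S=B$ constrains nothing; if retailer holding is prohibitive, the retailer orders only at the $B$-demand periods, and then nothing in your cost target obviously stops the supplier from ordering $B$ fresh units in one setup at such a period instead of assembling a triple. Closing both holes essentially reconstructs the paper's Appendix~B argument, at which point the two-line reduction via $h^R_t=M$ is the cleaner route.
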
 

\begin{proof}
As in the proof of Theorem~\ref{thm:duls_ibr_nls}, we do a reduction from the \ulsNLS problem, which is strongly NP-hard, as shown in Theorem~\ref{thm:uls_ib_nls}. We transform an instance of the \ulsNLS problem into the following instance of problem \dulsibsNLS. The supplier's costs are the ones 
of the \ulsNLS problem, \ie $u^S_t = u_t, f^S_t = f_t, p^S_t = p_t$ and $h^S_t = h_t$ for all $t \in \{1,\ldots,T\}$. The retailer's costs are given by $f^R_t = p^R_t = 0$ for all $t \in \{1,\ldots,T\}$ 
and, for all $t \in \{1,\ldots,T\}$,  $h^R_t = M$, where $M$ is a large number (we can fix $M=\sum_{t=1}^T (h_t + p_t)$). By this way, in an optimal solution of the \dulsibsNLS problem, no quantity will be stored at the retailer level.  The demands are the same than the ones of the \ulsNLS problem. Since $f_t^R=p_t^R=0$, the cost of an optimal solution of the \dulsibsNLS problem is equal to the optimal cost of its corresponding \ulsNLS problem. Figure~\ref{fig:dulsibs_nls_sol} illustrates such a solution.  
\begin{figure}[H]
    \centering
        \includegraphics[clip=true, scale = 0.85]{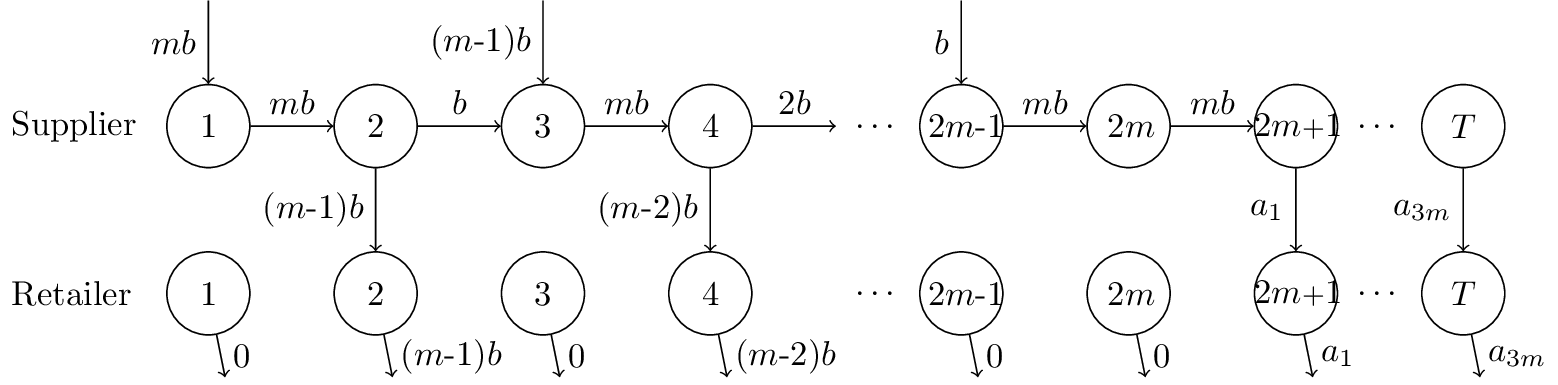}
        \caption{Solution for the \dulsibsNLS problem in the proof of Theorem \ref{thm:duls_ibs_nls}.}
        \label{fig:dulsibs_nls_sol}
\end{figure}
 
\par Therefore, by Theorem \ref{thm:uls_ib_nls}, the \dulsibsNLS problem is also strongly NP-hard.
\end{proof}

\subsection{The \dulsibsrNLS problem} \label{sec:duls_ibsr_nls}
\par Consider the case where the supplier and the retailer have inventory bounds. We prove that the 2ULS-IB$_{SR}$-NLS problem is strongly NP-hard.

\begin{thm}
The \dulsibsrNLS problem is strongly NP-hard, even if the inventory bound is stationary.
\end{thm}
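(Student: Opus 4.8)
The plan is to reduce from the \ulsNLS problem, mirroring exactly the strategy used for Theorems~\ref{thm:duls_ibr_nls} and~\ref{thm:duls_ibs_nls}. Since the \dulsibsrNLS problem carries inventory bounds at \emph{both} levels, the natural idea is to take an instance of \ulsNLS and make one level a faithful copy of it while rendering the other level "transparent" — that is, forcing it to pass quantities through without contributing any cost and without its inventory bound ever becoming binding.

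The construction I would use: given an instance of \ulsNLS with parameters $u_t, f_t, p_t, h_t, d_t$, build a \dulsibsrNLS instance in which the retailer is the copy of the \ulsNLS level, \ie $u^R_t = u_t$, $f^R_t = f_t$, $p^R_t = p_t$, $h^R_t = h_t$, and the same demands $d_t$; and the supplier is made transparent by setting $f^S_t = p^S_t = h^S_t = 0$ for all $t$, together with an inventory bound $u^S_t = \sum_{i=1}^T d_i$ at the supplier level (so the supplier bound is non-restrictive, and in fact the instance can even be taken with stationary bounds if $u_t$ is stationary, by choosing $u^S_t = \sum_i d_i$ constant). Since every supplier cost is $0$, the supplier can simply order each unit at period~$1$ (respecting the NLS requirement at the retailer level, which only constrains how the retailer's demands are sourced) and hold it at zero cost until the retailer orders it; hence the optimal cost of the \dulsibsrNLS instance equals the optimal cost of the \dulsibr-type subproblem, which since the supplier is free equals the optimal cost of the original \ulsNLS instance. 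An argument entirely analogous to the one in the proof of Theorem~\ref{thm:duls_ibr_nls} then gives the equivalence of optimal costs.

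The verification steps, in order, are: (1) check that the reduction is polynomial — trivially so, since the parameters are copied and padded with zeros; (2) argue that from an \ulsNLS solution of cost $c$ one obtains a feasible \dulsibsrNLS solution of cost $c$, by replicating the retailer plan and letting the supplier order all demand at period~$1$ — here one checks that the supplier inventory bound $u^S_t = \sum_i d_i$ is never exceeded and that the retailer inventory bounds and the NLS constraint are inherited verbatim from the \ulsNLS solution; (3) argue the converse — any feasible \dulsibsrNLS solution of cost $c$ restricts to a feasible \ulsNLS solution of cost at most $c$, because the retailer plan already satisfies all of the \ulsNLS constraints (inventory bound $u_t$, NLS, demand satisfaction) and the supplier contributes nothing to the cost. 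Combining (2) and (3) shows the two optima coincide; invoking Theorem~\ref{thm:uls_ib_nls} finishes the proof, and the stationary-bound refinement follows since a stationary $u_t$ in the \ulsNLS instance yields stationary bounds on both levels.

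I do not expect a serious obstacle here: the result is essentially a corollary, and the only point that needs a sentence of care is confirming that the NLS constraint — which by Definition~\ref{def:nls} is imposed only at the retailer level — is unaffected by how the supplier sources its own orders, so the "order everything at period~1" supplier policy is legitimate. If one wanted a slicker packaging, one could observe that setting $u^S_t = \sum_i d_i$ turns a \dulsibsrNLS instance into (effectively) a \dulsibrNLS instance, so the theorem is immediate from Theorem~\ref{thm:duls_ibr_nls}; but writing out the direct reduction as above keeps the proof self-contained and parallel to the preceding subsections.
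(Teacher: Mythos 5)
Your proof is correct, but it reaches the result by the ``mirror image'' of the paper's argument. The paper derives the theorem from its reduction for the \dulsibsNLS problem (Theorem~\ref{thm:duls_ibs_nls}): there the \emph{supplier} is the copy of the \ulsNLS instance and the retailer is made transparent by setting $f^R_t=p^R_t=0$ and $h^R_t=M$ large, so that no quantity is ever stored at the retailer level; consequently \emph{any} retailer inventory bound can be adjoined without further argument, which is why the paper's proof is a one-liner. You instead extend the reduction for the \dulsibrNLS problem (Theorem~\ref{thm:duls_ibr_nls}): the \emph{retailer} is the copy and the supplier is transparent, and you must then choose the added supplier bound large enough to be non-binding, namely $u^S_t=\sum_i d_i$. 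This is perfectly valid --- the supplier's stock never exceeds the total demand, the NLS constraint only restricts the retailer's sourcing, and $\sum_i d_i$ is bounded by a polynomial in the instance size times its largest number, so the reduction remains pseudo-polynomial and strong NP-hardness is preserved. The trade-off is minor: the paper's route avoids having to name a ``sufficiently large'' bound at all (anything works at the retailer since nothing is stored there), whereas your route keeps the construction parallel to the retailer-level case and makes the stationarity of both bounds immediate. Your closing observation --- that $u^S_t=\sum_i d_i$ effectively collapses \dulsibsrNLS to \dulsibrNLS --- is exactly the right way to see why the result is a corollary.
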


\begin{proof}
The proof of this theorem is the same as the one of Theorem \ref{thm:duls_ibs_nls} for the \dulsibsNLS problem by adding any inventory bound at the retailer level (in an optimal solution no quantity will be stored at the retailer level).
\end{proof}

\section{Conclusion and future work}
\par This paper considers two-level uncapacitated lot-sizing problems with inventory bounds, 
and provides a complexity analysis of these problems. We present an $\mathcal{O}(T^4)$ dynamic programming algorithm which solves the problem where the inventory bounds are set at the retailer level. 
When the inventory bounds are set at the supplier level, we prove that the problem is weakly NP-hard. We also present a pseudo-polynomial dynamic programming algorithm which ensures that this problem is not strongly NP-hard. Considering that lot-splitting is not allowed, we prove that the ULS problem with inventory bounds and the 2ULS problems where the inventory bounds are set either at the retailer level, or at the supplier level or at both of them are strongly NP-hard. 

\par The following tables summarize the complexity results for 2ULS-IB problems:
\begin{table}[H]
\begin{center}
\begin{tabular}{|p{2.8cm}|p{7.5cm}|}
\hline
Problem & Complexity \\ \hline
ULS-IB & polynomial \cite{AtamturkAlgo}, \cite{Love1973} \\ \hline
\dulsibr & polynomial (Section \ref{sec:2uls_ibr}) \\ \hline
\multirow{2}*{\dulsibs} & polynomial with particular cost structure~\cite{Jaruphongsa2004} \\ \cline{2-2}
 & NP-hard(Section \ref{sec:2uls_ibs})\\ \hline
\dulsibsr & NP-hard (Section \ref{sec:2uls_ibsr})  \\ \hline
\end{tabular}
\end{center}
\caption{Complexity results with lot-splitting}
\end{table}
\begin{table}[H]
\begin{center}
\begin{tabular}{|p{3.6cm}|p{7.5cm}|}
\hline
Problem &  Complexity \\ \hline
ULS-NLS  & strongly NP-hard (Section \ref{sec:uls_nls}) \\ \hline
\dulsibrNLS & strongly NP-hard (Section \ref{sec:duls_ibr_nls}) \\ \hline 
\multirow{2}*{\dulsibsNLS}  & weakly NP-hard with demand time windows~\cite{Jaruphongsa2004}\\ \cline{2-2}
 &  strongly NP-hard (Section \ref{sec:duls_ibs_nls}) \\ \hline
2ULS-IB$_{SR}$-NLS  & strongly NP-hard (Section \ref{sec:duls_ibsr_nls})\\ \hline
\end{tabular}
\end{center}
\caption{Complexity results without lot-splitting}
\end{table}

\par It would be interesting for a future work to improve the running time of the algorithm solving the \dulsibr problem. 
Moreover, the complexity of the \dulsibs problem where the inventory bounds of the supplier 
are stationary is an open problem. Another interesting perspective is to consider that the supplier and the retailer share the same inventory facility. 
In this case, at each period, the inventory quantity of the supplier plus the one of the retailer cannot exceed a given inventory bound. 
The lot-sizing problems that have been studied is this paper consider a single item. It would also be interesting to study the case where there are several items. Finally, investigating efficient algorithms to solve the NP-hard 2ULS problems with inventory bounds is also a promising issue for practical applications.\\

\noindent \textbf{Acknowledgements} This work was supported by FUI project RCSM ``Risk, Credit Chain \& Supply Chain Management'', financed by R\'egion Ile-de-France.

\appendix
\makeatletter
\def\@seccntformat#1{Appendix~\csname the#1\endcsname}
\makeatother
\section{} \label{appendix:atamturk}
\noindent {\it Proof of Observation~\ref{obs1}. }
(i) If $\phi^{\alpha\beta}_{ijk} = +\infty$, then the \reg subplan $[i,j]$ with a single order at period $k$, $s^R_{i-1} = \alpha$ and $s^R_j = \beta$ is 
not a block. The violation(s) observed in the \reg subplan $[i,j]$ will also hold for the \reg subplan $[i-1,j]$. \\

\par (ii) If $\phi^{0\beta}_{ijk} \neq +\infty$, then $[i,j]^0_\beta$ is a block, and by Property~\ref{prop:prod_period}\ref{prop:prod_period_i} there is an ordering period at $k=i$. We consider the \reg subplan $[i-1,j]$ with an order at period $k$, $s^R_{i-2} = 0$ and $s^R_j=\beta$. If $d_{i-1} = 0$, then the \reg subplan $[i-1,j]$ is not a block since $s^R_{i-2} = 0$. If $d_{i-1} > 0$, then $d_{i-1}$ could not be covered and thus the \reg subplan $[i-1,j]$ is not a block. \\

\par (iii) If $\phi^{u_{i-1}\beta}_{ijk} \neq +\infty$, then  $[i,j]^{u_{i-1}}_\beta$ is a block with an ordering period $k$ if it exists. We consider the \reg subplan $[i-1,j]$ with a single order at period $k$, $s^R_{i-2} = u_{i-2}$ and $s^R_j = \beta$. We want to determine if this \reg subplan is a block. 
\par We know that $u_{i-2} \leq u_{i-1} + d_{i-1}$ (Assumption~\ref{obs:ib_prop}). If $u_{i-2} = u_{i-1} + d_{i-1}$, then this \reg subplan is not a block because in that case $s^R_{i-1} = u_{i-1}$.
If $u_{i-2} < u_{i-1} + d_{i-1}$, the we have $s^R_{i-1} < u_{i-1} \leq d_{ij} + \beta$, and there must be an ordering period at $k$ in the subplan $[i-1,j]$. 
\par Moreover, if $u_{i-2} > d_{i-1,k-1}$, then we have a block $[i-1,j]^{u_{i-2}}_\beta$. 
The retailer has to order a quantity $u_{i-1} - u_{i-2} + d_{i-1} > 0 $ 
in addition to $X_{ij} = d_{ij} - u_{i-1} + \beta$ at period $k$. The inventory quantities between periods $k$ and $j$ remain unchanged in the block $[i-1,j]^{u_{i-2}}_\beta$. 
Since the demand $d_{i-1}$ has to be covered by $u_{i-2}$, there are $u_{i-1} - u_{i-2} + d_{i-1}$ less units in the inventory between periods $i-1$ and $k-1$. 
The cost $\phi^{u_{i-2}\beta}_{i-1,j,k}$ of the block $[i-1,j]^{u_{i-2}}_\beta$ can be derived from $\phi^{u_{i-1}\beta}_{ijk}$ by considering these two cases:\\
Case 1: Assume that a quantity $X_{ij} > 0 $ is ordered at period $k$ in the block $[i,j]^{u_{i-1}}_\beta$. Then, the cost of the block $[i-1,j]^{u_{i-2}}_\beta$ is given by:
$\phi^{u_{i-1}\beta}_{ijk} + h^R_{i-2}u_{i-2} + (p^R_k - \sum_{l=i-1}^{k-1}h^R_l)(u_{i-1} - u_{i-2} + d_{i-1}) = \phi^{u_{i-1}\beta}_{ijk} + \Delta_1 $.\\
Case 2: Assume that no ordering period occurs in the block $[i,j]^{u_{i-1}}_\beta$. Then, an additional fixed ordering cost $f^R_k$ must be considered to compute 
the cost of the block $[i-1,j]^{u_{i-2}}_\beta$, which will be given by:
$\phi^{u_{i-1}\beta}_{ijk} + f^R_k + \Delta_1 = \phi^{u_{i-1}\beta}_{ijk} + \Delta_2$. 
\qed

\section{}

\noindent {\it Proof of Theorem~\ref{thm:uls_ib_nls}. }
We show that the 3-Partition problem, which is strongly NP-hard~\cite{gareyjohnson}, can be reduced to the \ulsNLS problem in polynomial time. Recall that an instance of the 3-Partition problem is given by an integer $b$ and $3m$ integers $(a_1, \ldots, a_{3m})$ such that $\sum_{i=1}^{3m} a_i = mb$ and $b/4 < a_i < b/2$ for all $i \in \{1,\ldots,3m\}$. The question is: does there exist a partition $A_1 \cup \ldots \cup A_m$ of $\{1, \ldots, 3m\}$ such that $\sum_{i \in A_j} a_i = b$ for all $j \in \{1,\ldots, m\}$. 

\par We transform an instance of the 3-Partition problem into an instance of the \ulsNLS problem in the following way:
\begin{itemize}
\item[-] $T=5m$ periods. Let us note $\mathcal{T}_1$ (resp. $\mathcal{T}_2$) the set of odd (resp. even) periods in the set $\{1,\ldots,2m\}$. 
\item[-] $d_t = 0$ for all $t \in \mathcal{T}_1$ \\ $d_t = (m-t/2)b$ for all $t \in \mathcal{T}_2$ \\ $d_t = a_{t-2m}$ for all $t \in \{2m+1,\ldots,T\}$
\item[-] $f_t = 0$ for all $t \in \mathcal{T}_1$ \\ $f_t = 1$ for all $t \in \mathcal{T}_2 \cup \{2m+1,\ldots,T\}$
\item[-] $h_t = p_t = 0$ for all $t \in \{1,\ldots,T\}$
\item[-] $u_t = mb$ for all $t \in \{1,\ldots,T\}$
\end{itemize}
The instance is illustrated in Figure \ref{fig:uls_nls_inst}. The fixed ordering costs are indicated at the top of each period. The inventory bounds are represented on the horizontal edges.
\begin{figure}[H]
    \centering
        \includegraphics[clip=true, scale = 0.85]{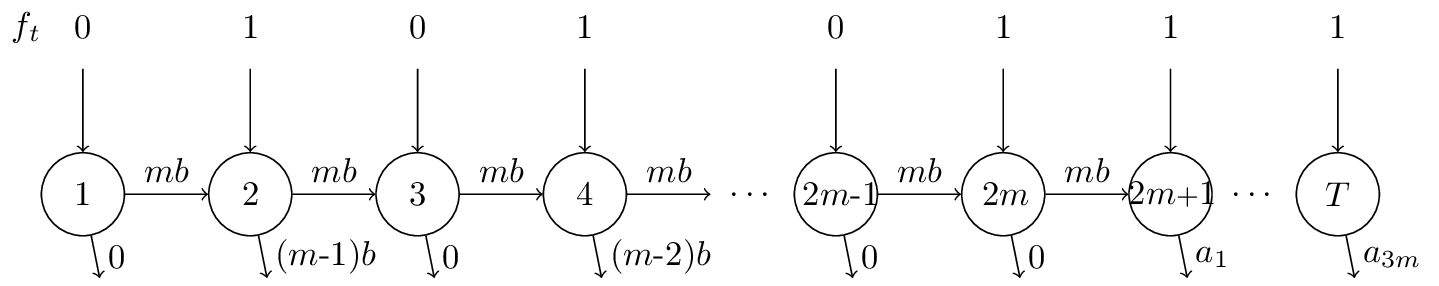}
        \caption{Instance of the \ulsNLS problem in the proof of Theorem \ref{thm:uls_ib_nls}.}
        \label{fig:uls_nls_inst}
\end{figure}

\par Let us show that there exists a solution to the \ulsNLS problem of cost at most $0$ if and only if there exists a solution to the 3-Partition problem.\\

\par Assume that there exists a solution $(A_1,\ldots,A_{3m})$ of the 3-Partition problem. 
The cost of the following solution of the \ulsNLS problem is $0$: at each period $t \in \mathcal{T}_1$, we order $x_t = \sum_{i \in A_{(t+1)/2}} a_i + d_{t+1} = b + b \left( m - \frac{t+1}{2} \right)$ units. Since the ordering cost is equal to 0 for all $t \in \mathcal{T}_1$, it costs 0 to order these units. At each period $t \in \mathcal{T}_2$, the demand $d_t$ is satisfied and $b$ units are stored which implies that there is exactly $s_t = \frac{t}{2}b$ units in stock at the end of period $t$. At each period $t \in \mathcal{T}_1$, we store exactly a quantity $s_{t-1} + x_t = \frac{t-1}{2}b + \left( m - \frac{t-1}{2}\right)b = mb$ and the inventory bound $u_t$ is not exceeded. Each demand $d_t$ for all $t<2m$ is satisfied and there is $mb$ units in stock at period $2m$ for satisfying the demands at period $\{2m+1,\ldots,T\}$. Since there is no holding cost, the cost of this solution is 0. Note that this solution fulfills the NLS constraint since each demand is satisfied by a single order.

\par Assume now that there exists a solution to the \ulsNLS problem of cost at most 0 (see Figure \ref{fig:uls_nls_sol}). 
\begin{figure}[H]
    \centering
        \includegraphics[clip=true, scale = 0.85]{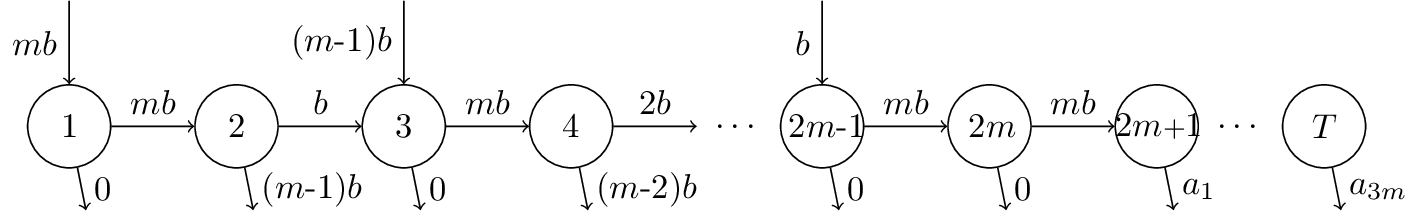}
        \caption{Solution for the \ulsNLS problem in the proof of Theorem \ref{thm:uls_ib_nls}.}
        \label{fig:uls_nls_sol}
\end{figure}
Since the fixed ordering cost is equal to $1$ for all $t \in \mathcal{T}_2 \cup \{2m+1,\ldots,T\}$, we cannot order at these periods. Thus, all orders are set at period $t \in \mathcal{T}_1$. Since for each period $t \in \mathcal{T}_2$, $d_t = (m-t/2)b$, and since the inventory bound is $mb$, at most $\frac{t}{2}b$ units can be stored from period $t \in \mathcal{T}_2$ to a period in $\mathcal{T}_1$ . Since $2m$ units have to be available at period $2m$ (otherwise the cost will be greater than 0), then $\frac{t}{2}b$ units have to be stored from period $t \in \mathcal{T}_2$ to period $t+1$. So, we have to order $b$ units at each period $t \in \mathcal{T}_1$ for satisfying the demands $d_{2m+1,T}$.  Assuming the NLS constraint, each demand $d_t$ for all $t \in \{2m+1, \ldots,T\}$ is satisfied by a single ordering period at $t \in \mathcal{T}_1$. So, there is a partition of the periods $\{2m+1, \ldots,T\}$ into $m$ sets $(A_1,\ldots,A_m)$ such that $\sum_{i \in A_j} d_i = b $ for all $j \in \{1,\ldots,m\}$. Since each demand $d_t$ for all $t \in \{2m+1, \ldots,T\}$ corresponds to an integer of $(a_1, \ldots,a_{3m})$, this means that there exists a solution to the 3-Partition problem.
\qed

\bibliographystyle{elsarticle-num}
\bibliography{biblio}

\end{document}